
\documentclass[aoas]{imsart}

\RequirePackage{amsthm,amsmath,amsfonts,amssymb}
\RequirePackage[authoryear]{natbib}
\RequirePackage{amsthm,amsmath,amsfonts,amssymb}
\RequirePackage[colorlinks,citecolor=blue,urlcolor=blue]{hyperref}

\usepackage[ruled,vlined]{algorithm2e}
\usepackage{algpseudocode}
\usepackage{graphicx,floatrow}
\usepackage{enumitem}
\usepackage{subcaption}
\usepackage[dvipsnames]{xcolor}
\usepackage{booktabs}
\usepackage{wrapfig}
\usepackage{fancybox}
\usepackage{url}

\startlocaldefs

\theoremstyle{plain}

\newtheorem{thm}{Theorem}
\newtheorem{Assumption}{Assumption}

\newtheorem{Corollary}{Corollary}

\newtheorem{Example}{Example}

\theoremstyle{remark}
\theoremstyle{plain}
\newtheorem{remark}{Remark}


\newcommand{\X}{\mathbf{X}}

\renewcommand{\S}{\mathbf{S}}
\newcommand{\s}{\mathbf{s}}
\newcommand{\Z}{\mathbf{Z}}
\newcommand{\z}{\mathbf{z}}

\renewcommand{\P}{\mathbb{P}}
\newcommand{\Ytilde}{\widetilde{Y}}

\newcommand{\I}{\mathbb{I}}

\newcommand{\p}{\mathbf{p}}

\newcommand{\series}[1]{\{#1\}_{t\ge0}}
\newcommand{\IID}{{IID}}
\newcommand{\DID}{{DID}}
\DeclareRobustCommand{\boxit}[1]{\scalebox{.8}{\ovalbox{#1}}}

\newcommand{\ra}[1]{\renewcommand{\arraystretch}{#1}}

\endlocaldefs

\begin{document}

\begin{frontmatter}
\title{Detecting Distributional Differences in Labeled Sequence Data with Application to Tropical Cyclone Satellite Imagery}
\runtitle{Distributional Differences in Labeled Sequence Data}

\begin{aug}
\author[A]{\fnms{Trey} \snm{McNeely}}, 
\author[A]{\fnms{Galen} \snm{Vincent}},
\author[B]{\fnms{Kimberly M} \snm{Wood}}, 
\author[C]{\fnms{Rafael} \snm{Izbicki}}, 
\and
\author[A]{\fnms{Ann B} \snm{Lee}}
\address[A]{Department of Statistics and Data Science, Carnegie Mellon University}
\address[B]{Department of Geosciences, Mississippi State University}
\address[C]{Department of Statistics, Federal University of S\~ao Carlos}
\end{aug}

\begin{abstract}
Our goal is to quantify whether, and if so how, spatio-temporal patterns in tropical cyclone (TC) satellite imagery signal an upcoming rapid intensity change event. To address this question, we propose a new nonparametric test of association between a time series of images and a series of binary event labels. We ask whether there is a difference in distribution between (dependent but identically distributed) 24-h sequences of images preceding an event versus a non-event. By rewriting the statistical test as a regression problem, we leverage neural networks to infer modes of structural evolution of TC convection that are representative of the lead-up to rapid intensity change events. Dependencies between nearby sequences are handled by a bootstrap procedure that estimates the marginal distribution of the label series. We prove that type I error control is guaranteed as long as the distribution of the label series is well-estimated, which is made easier by the extensive historical data for binary TC event labels. We show empirical evidence that our proposed method identifies archetypes of infrared imagery associated with elevated rapid intensification risk, typically marked by deep or deepening core convection over time. Such results provide a foundation for improved forecasts of rapid intensification.
\end{abstract}

\begin{keyword}
\kwd{two-sample testing}
\kwd{high-dimensional time series}
\kwd{association studies}
\kwd{remote sensing}
\kwd{functional data}
\kwd{weather forecasting}
\end{keyword}

\end{frontmatter}

\section{Introduction}\label{sec:intro}


A broad array of problems in the physical, environmental and biological sciences feature high-dimensional time series $\{\X_t\}_{t \geq 0}$,  associated with binary ``labels'' $\{Y_t\}_{t \geq 0}$ indicating an event of interest. Examples include sequences of satellite or other remote sensing data paired with natural events like the occurrence of an earthquake or the rapid intensification of a hurricane; or multivariate electroencephalographic (EEG) and magnetoencephalographic (MEG) data showing brain activity paired with physiological events like the occurrence of a stroke. Most research on this front concerns prediction of events \citep{luo2014correlating}, measurement of event impact \citep{scharwachter2020does,scharwachter2020two}, or detection of change-points \citep{aminikhanghahi2017survey,evans2020sequential} after events occur. Furthermore, joint analyses of time and event series often assume that the time series is univariate, or model the relationship between multiple scalar quantities as they change over time. There is a lack of theoretically and computationally sound methods for (non-parametric)  association studies and statistical tests for dependent {\em and} high-dimensional sequence data.

This work is motivated by the need to identify spatio-temporal patterns in the convective evolution of tropical cyclone satellite imagery prior to a rapid intensity change; see ``Motivating Application.'' Our immediate goal is not operational forecasting or prediction per se, but rather gaining scientific insight into the spatio-temporal evolution $\S_{<t}=\{\X_{t-T},\X_{t-T+1}, \ldots,\X_{t}\}$ of convective structure or satellite imagery  $\X_{t}$, leading up to a rapid intensity change event ($Y_t=1$), for some lead time $T,$ and identifying whether it differs in distribution from sequences  $\S_{<t}$ that precede a non-event ($Y_t=0$).

From a statistical methodology standpoint, this problem amounts to a challenging two-sample testing problem for high-dimensional {\em dependent but identically distributed} (\DID) data. From the observed time and event series, we extract labeled sequence data $\{(\S_{<t}, Y_t)\}_{t \geq 0}$, which we assume is a stationary process. That is, both $\S_{<t}$ and $Y_t$ are auto-correlated and dependent for different instances of time $t$. By the assumption of stationarity, the data  $(\S_{<t}, Y_t)$ are identically distributed. Given historical data,  we test whether the distributions of $\S_{<t}|Y_t=1$ and   $\S_{<t}|Y_t=0$ are the same. The challenge is to construct an efficient test of (\ref{eqn:original_null}) that is valid (controls type I error) for \DID\ data, and that applies to different types of sequence data (images, functions, and sequentially observed data from multiple physical probes) {with a minimum of assumptions}. Finally, for many problems of applied interest, scientists want to know not just whether sequences preceding an event versus non-event are significantly different in distribution, but if so, also {\em how} the two distributions are different. That is, if the null hypothesis that the distributions of (stationary) sequences $\S | Y=1$ and $\S | Y=0$ are the same is rejected, the question is how to identify the patterns in the state space $\mathcal{S}$ of  $\S$ that contributed to the rejection.\footnote{The stationarity assumption makes the state space $\mathcal{S}$  well-defined, and allows us to drop the subindex $<t$ in our notation.} These patterns correspond to sequences $\s \in \mathcal{S}$ that are more or less likely to be associated with an event ($Y=1$) than by chance.\\

\noindent \textbf{Motivating Application: Tropical Cyclones.} Tropical cyclones (TCs) are highly structured storms which rank among the deadliest and costliest natural disasters in the United States \citep{Klotzbach2018}. Cases of rapid intensification (RI) and rapid weakening (RW) of such storms --- defined for this work as a change in maximum wind speed of at least 25 knots within 24 hours{, $Y=1$} --- are notoriously difficult to forecast \citep{kaplan2003large,Kaplan2010,kaplan2015rii,wood2015definition}. RI prediction has thus been the ``highest-priority forecast challenge'' identified by the National Hurricane Center (NHC) in the last decade \citep{gall2013hurricane}. RW events, while a lower priority than RI, are also associated with above-average forecast errors and are of great interest to meteorologists.

Models such as SHIPS-RII \citep{kaplan2015rii} have made great progress on skillfully forecasting RI events, but these approaches rely on scalar predictors (e.g., area-averaged vertical wind shear) at fixed points in time and thus neglect the evolving spatial structure of the TC, and these structural changes often influence such events. To address this gap, meteorologists and forecasters seek interpretable patterns in the spatiotemporal structure of physically-relevant 2D fields that could indicate an elevated risk of RI. The first step in this search is to find interpretable temporally-evolving sequences of spatial structure $\S_{<t}$ that differ in distribution depending on whether a TC is undergoing a rapid intensity change event ($Y_t=1$) or not ($Y_t=0$). 

Our application examines one such 2D field: deep convection within the storm. Convection, or deep thunderstorm-like clouds, is a key component of the mechanism through which TCs extract energy from the ocean, meaning that convection, in both strength and distribution, should be closely related to storm intensity. We quantify convective structure using cloud-top temperature as measured by infrared (IR) imagery from the Geostationary Operational Environmental Satellites (GOES). As convection strengthens, the cloud tops are pushed higher into the atmosphere where temperatures are lower. The temperature of the cloud top can therefore be used as a proxy for the strength of convection in the storm. We ask: ``Do 24-hour sequences of convective structure, $\series{\S_{<t}}$, contain information about upcoming intensity change, $\series{Y_t}$? If so, how do the spatio-temporal patterns of rapidly changing TCs differ from that of not rapidly changing TCs?"

Both $\series{\S_{<t}}$ and $\series{Y_t}$ are highly dependent (auto-correlated) time series in this application: environmental fields such as convection change slowly, and rapid intensity change events are by definition extended periods of change (typically 12-48 hours), meaning that successive measurements of these variables at short time intervals (e.g., 3 hours) will be highly dependent. It is particularly challenging to perform traditional two-sample tests in this \DID\ setting due to the combination of low sample size (671 TCs between 2000-2020), high-dimensional image data, and strong temporal dependence.\\

\noindent \textbf{Contribution and Relevance.} Our contribution is two-fold: (i) On the methodology side,  we present a new statistical framework for detecting arbitrary distributional differences in a high-dimensional setting with labeled sequence data. The proposed two-sample test is valid in a \DID\ setting and provides local diagnostics as to the type of sequences $\s \in \mathcal{S}$ that contribute to rejection of $H_0$ in Equation~\ref{eqn:new_null}. (ii) On the applied side, we utilize our proposed framework to identify and describe patterns of convective evolution in TCs prior to the onset of rapid intensity change.

\begin{wrapfigure}{R}{0.6\linewidth}
  \centering
  \vspace{-2.75em}
  \includegraphics[width=\linewidth]{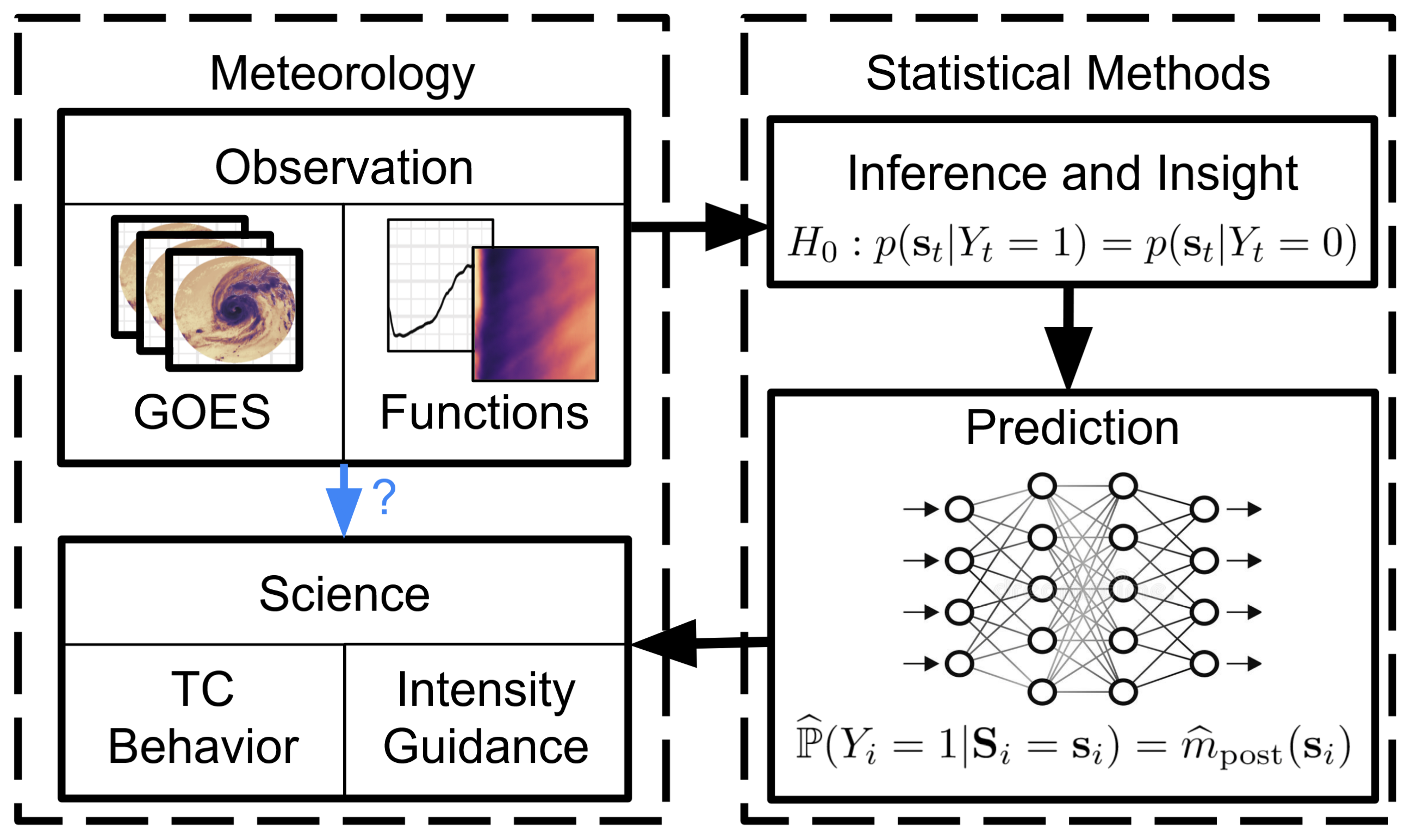}
  
  \caption{\small \textbf{Leveraging prediction tools for statistical inference and scientific insight.}  In our application, we seek relationships between high-dimensional TC observations and TC behavior. We pose this as a two-sample testing problem, but the high dimensionality and sequential correlation in our data make this hypothesis difficult to test. By rewriting the test in terms of a prediction problem, we are able to leverage powerful prediction methods to infer modes of structural evolution in TCs which indicate rapid intensity change.}
  \label{fig:flowchart}
\end{wrapfigure}
 
Figure \ref{fig:flowchart} shows a flow chart of our overall approach. As indicated by the vertical blue arrow to the left, we seek quantitative conclusions regarding how TC behavior relates to convective structure as observed by GOES imagery and extracted functions. We cast the two-sample test as a prediction problem  (Figure \ref{fig:flowchart}, right, for ``Statistical Methods''). This allows us to leverage powerful prediction techniques, such as convolutional neural nets (CNNs), to gain scientific insight from high-dimensional functional or video data without a prior dimension reduction (arrow back to ``TC behavior'' and ``Intensity Guidance''). To account for dependence in the labeled sequence $\series{(\S_{<t},Y_t)}$, we develop a bootstrap regression test (Algorithm \ref{alg:distributional-differences}), which yields a valid p-value accompanied by interpretable diagnostics.

Our bootstrap test is ideal for TC studies: the low number of unique TCs for which high-resolution satellite imagery is available prohibits efficient inference via classical blocking schemes, whereas our method can take advantage of the extensive historical record of event (label) sequences. Theorem \ref{thm:convergence_test_stat} shows that the bootstrap test is valid as long as we can estimate the distribution of the label sequences well; this is made easier by the fact that labels are binary. 

Section \ref{sec:results_TC} provides evidence that deep and/or deepening convection is a necessary precursor to RI, but that other factors (e.g., low vertical shear) must be present. This elevated risk of RI due to deepening convection is often present \emph{prior} to the onset of intensification, demonstrating value to forecasting pipelines. RW, meanwhile, is a more variable process and did not return significant results.\\

\noindent \textbf{Relation to Other Work.}
{\em Event impact and causal inference for time series.} Our problem set-up is closest in flavor to association and causal inference studies for testing the relationship between a time and event series \citep{scharwachter2020two,luo2014correlating}. The vast majority of these works assume univariate time series, or test for each dimension in a multivariate time series separately \citep{candes2018panning}. The recent paper by \cite{scharwachter2020two} leverages a two-sample testing approach for high-dimensional data as in this work, albeit to study how a discrete binary event history impacts a multivariate time series, rather than how the evolution of a complex time series is associated with a later event. A key methodological difference is that their work handles the lack of independence by sampling data points that are far from each other, while our method allows for the use of all available data. This distinction is key for applications with limited data. Their proposed multiple testing procedure also does not control the false positive error rate exactly, but heuristically.
 
{\em Two-sample tests in high dimension.} Recently, there has been a growing interest in nonparametric two-sample tests in high dimension. Popular machine learning-based approaches include classification accuracy tests \citep{kim2021classification}, kernel-based tests \citep{gretton2012kernel}, and divergence-based density ratio tests \citep{Moon2014,kandasamy2015}. We use the same regression test statistic (Equation \ref{eqn:test_stat}) as \cite{kim2019global} to allow for interpretable local diagnostics. However, \cite{kim2019global} and the above-mentioned two-sample testing papers only handle the standard independent, identically distributed (\IID) data setting (Figure \ref{fig:setting1}), whereas the methods in this work apply to \DID\ sequence data (Figure \ref{fig:setting3}). 

Also related to our paper are modern tests of conditional independence between two random vectors $Y$ and $Z$ given a third random vector $X$ (see Discussion in Section \ref{sec:confounding}) and tests of the conditional mean and quantile dependence of $Y$ on $X$. Most high-dimensional research on this front (including model-X knockoffs; \citealt{candes2018panning,sesia2019,berrett2020conditional}) assumes \IID\ data $(X_{i1}, \ldots, X_{ip}, Y_i) \sim F_{X,Y}$ for $i=1,\ldots,n$, and also assumes that the distribution of $Y$ given $X$ depends on only a small fraction of the $p$ covariates. 
The latter sparsity assumption is reasonable for, e.g., genome-wide association studies, but not for remote sensing applications with image and functional data. Another key difference between our testing approach and so-called model-X methodologies, which also use machine learning algorithms to approximate the distribution of $Y$ given $X$ \citep{katsevich2020theoretical}, is that model-X approaches estimate or make assumptions regarding the distribution of $X$ (or $X$ given $Z$), whereas we instead estimate the distribution of the response $Y$.

{\em Bootstrap for time series.} There exist many different types of bootstrap methods for dependent data; see, e.g, \cite{buhlmann2002bootstraps, horowitz2003bootstrap, kreiss2011bootstrap} for a review. The goal is  often to model the data distribution for parameter estimation, rather than as here to test for an association between a label series and a high-dimensional time series. Our framework also does not bootstrap the entire distribution of $\{(\S_{<t}, Y_t)\}_{t \geq 0}$ but only the distribution of labels $\{Y_t\}_{t \geq 0}$. For binary labels, this is a much easier estimation problem than, for example, block-bootstrap of high-dimensional time series.

{\em TC analyses.} 
Many TC analysis tools incorporate information from 2D fields via scalar values such as area averages; for example, the operationally-used SHIPS and SHIPS-RII forecast schemes include scalars for the fraction of pixels with IR temperatures below -30$^\circ$C within the 50-200-km annulus \citep{DeMaria1999,kaplan2015rii}. Such approaches discard complex, time-evolving structure in the 2D fields. More recent analyses take spatial information into account by applying dimension reduction techniques like functional principal component analysis (PCA) to the field, such as for TC eye formation forecasts in \citep{knaff2017forecasting}. However, dimension reduction adds an extra layer of abstraction between TC structure and subsequent TC behavior and can reduce meteorologists' ability to interpret the information. The {ORB} framework (Organization, Radial structure, Bulk morphology) was introduced in \cite{mcneely2020unlocking} to summarize convective structure into a dictionary of functional features. The key objective was to utilize entire functions to quantify structure rather than thresholded feature statistics, thereby enabling richer descriptions of spatial structure while remaining interpretable. In this paper, we leverage CNNs to model the relationship between TC intensity change and the temporal evolution of one such continuous {ORB} function --- the radial profile of cloud-top temperatures as observed by GOES-IR imagery \citep{Sanabia2014,mcneelyquantifying,mcneely2020unlocking}. Our bootstrap test then provides a powerful tool for directly assessing whether there is an association between RI or RW events and TC structure or the TC environment in terms of structural summaries (1D {ORB} functions) that are easily digestible to meteorologists.

\noindent {\bf Outline.} 
We begin by defining the problem set-up in Section \ref{sec:setup}, paying special attention to different dependence structures in $\series{(\S_{<t},Y_t)}$. In Section \ref{sec:tcdata} we describe the TC data. In Section \ref{sec:methods} we lay out the details of our bootstrap test for distributional differences in dependent sequence data. In Section \ref{sec:theory} we provide theoretical justification for validity of the bootstrap test. In Section \ref{sec:synthetic} we introduce a simulated toy example to empirically demonstrate the advantage of a Markov Chain-based bootstrap test over traditional permutation testing. In Section \ref{sec:results_TC} we apply our method to study the evolution of convective structure in TCs prior to a rapid intensity change. Finally, in Section  \ref{sec:extensions} we discuss potential extensions of our method.

\section{Set-Up}\label{sec:setup}
Our goal is to detect distributional differences in labeled sequence data $\{(\S_{<t}, Y_t)\}_{t\ge0}$, where the ``labels'' $Y_t \in \{0,1\}$ are binary, and the covariates $\S_{<t} \in \mathcal{S}$ can represent high-dimensional quantities. We formalize this question in the hypothesis
\begin{align}
\label{eqn:original_null} 
   H_0: p (\s_{<t} | Y_t=1) &= p (\s_{<t} | Y_t=0),\ \text{for all} \ t \ \text{and}\ \s_{<t},   \ \text{versus}\\\ H_1: p (\s_{<t} | Y_t=1) &\neq p (\s_{<t} | Y_t=0), \ \text{for some} \ t \ \text{and}\ \s_{<t}.\notag
\end{align}
We assume that the sequence $\{(\S_{<t},Y_t)\}_{t \geq 0}$ is stationary (Assumption \ref{assump:stationary}), which allows us to rewrite the hypothesis as
\begin{align}
\label{eqn:new_null}
   H_0: p (\s | Y=1) &= p (\s | Y=0),\ \text{for all} \ \s \in \mathcal{S},   \ \text{versus}\\\ H_1: p (\s | Y=1) &\neq p (\s | Y=0), \ \text{for some} \ \s \in \mathcal{S}.\notag
\end{align}

We will study three different settings (see Figure \ref{fig:settings}) which admit increasingly complex dependence in $\series{\S_{<t}}$ and $\series{Y_t}$. In Setting A (Figure \ref{fig:setting1}) there is no temporal dependence, meaning the data $\{(\S_{<t}, Y_t)\}$ are \IID. Testing Equation \ref{eqn:original_null} is still challenging in this setting when $\S_{<t}$ is high-dimensional, but various methods and associated theory have been developed to handle these challenges; see e.g.~\cite{kim2019global,kim2021classification}. In Setting B (Figure \ref{fig:setting2}) there is temporal dependence in $\{\S_{<t}\}$, but $\{Y_t\}$ are conditionally independent given the associated value in $\{\S_{<t}\}$. In Setting C (Figure \ref{fig:setting3}) there is temporal dependence in both $\{\S_{<t}\}$ and $\{Y_t\}$, regardless of the association between the two variables. We expect the TC data to exhibit the structure of Setting C because intensity change labels ($Y_t$) are not conditionally independent solely given the convective structure of the storm ($\S_{<t}$). {The effect of convective activity on TC intensity change generally manifests within 24 hours, so a 24-hour history in $\S_{<t}$ should be sufficient \citep{rogers2010convective}.}

\begin{figure}[htbp]
	\centering
	
	\begin{subfigure}[t]{0.3\linewidth}
		\centering
		\vskip 0pt
		\includegraphics[width=\linewidth]{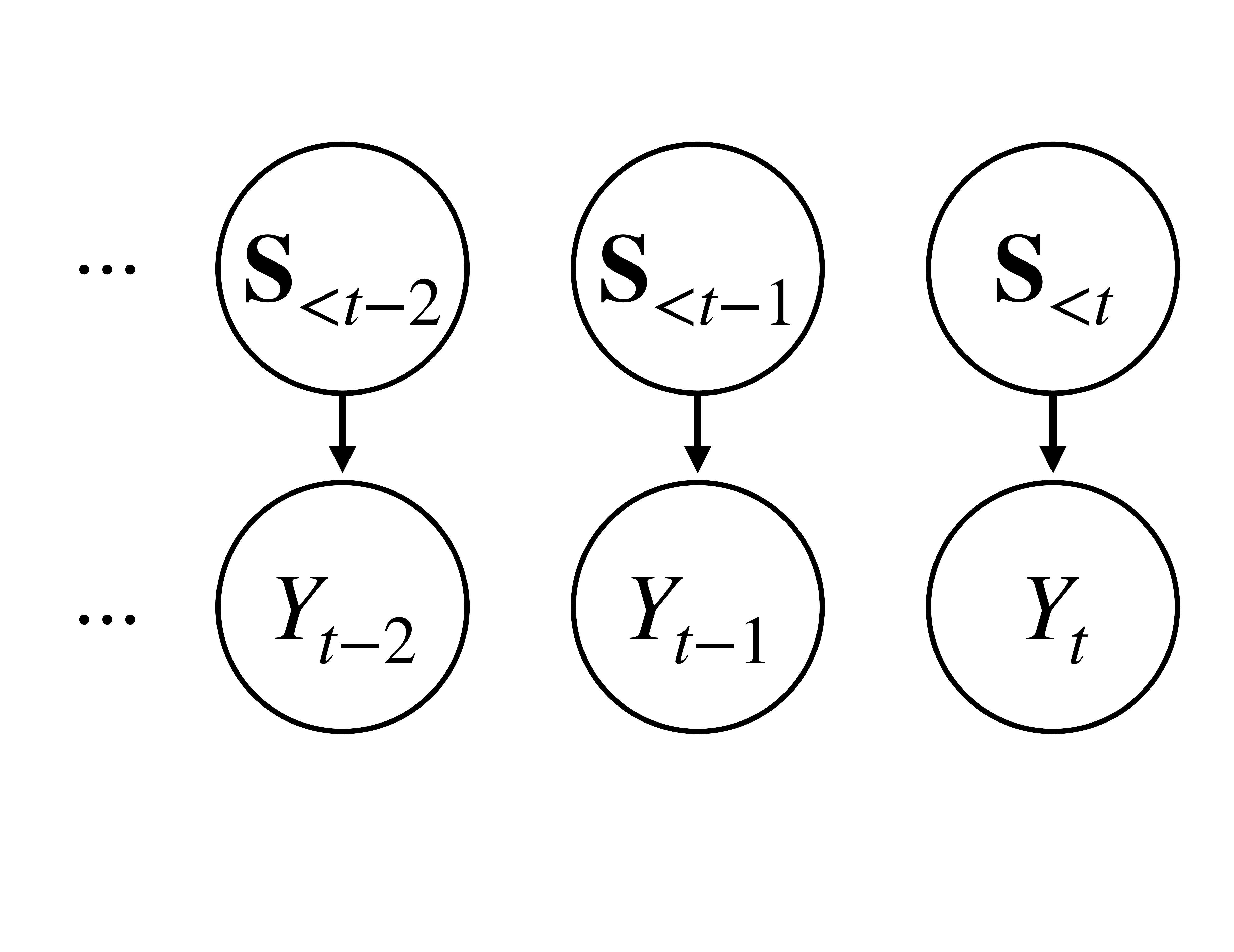}
		\caption{\small Setting A: $\{(\S_{<t}, Y_t)\}_{t\ge0}$ with no temporal dependence between pairs $(\S_{<t},Y_t)$ for different $t$.}
		\label{fig:setting1}		
	\end{subfigure}
	\quad
	\begin{subfigure}[t]{0.3\linewidth}
		\centering
		\vskip 0pt
		\includegraphics[width=\linewidth]{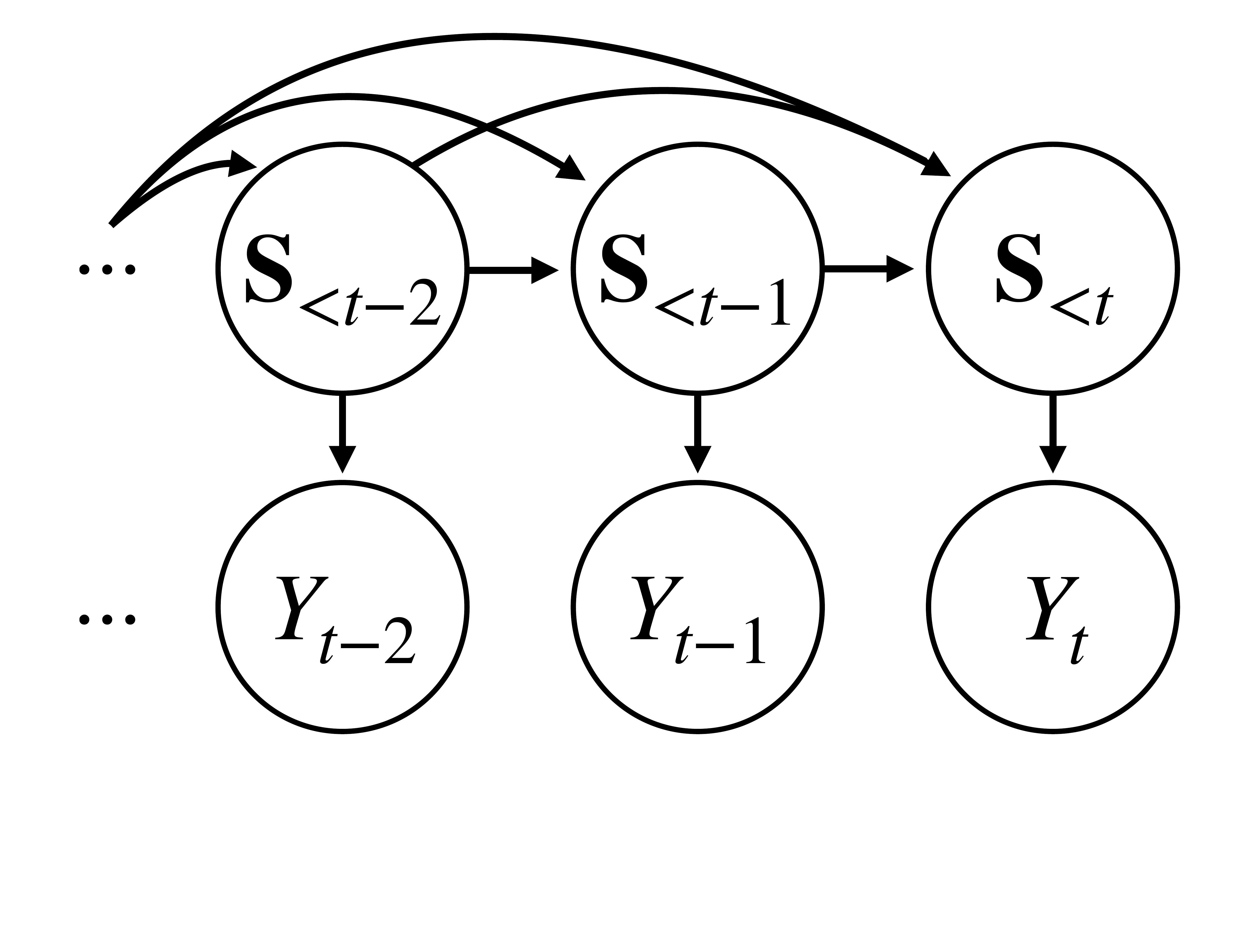}
		\caption{\small Setting B: $Y_t$ conditionally independent of $Y_{t-1}$ given $\S_{<t}$; $\S_{<t}$ is autocorrelated.} 
		\label{fig:setting2}		
	\end{subfigure}
	\quad
	\begin{subfigure}[t]{0.3\linewidth}
		\centering
		\vskip 0pt
		\includegraphics[width=\linewidth]{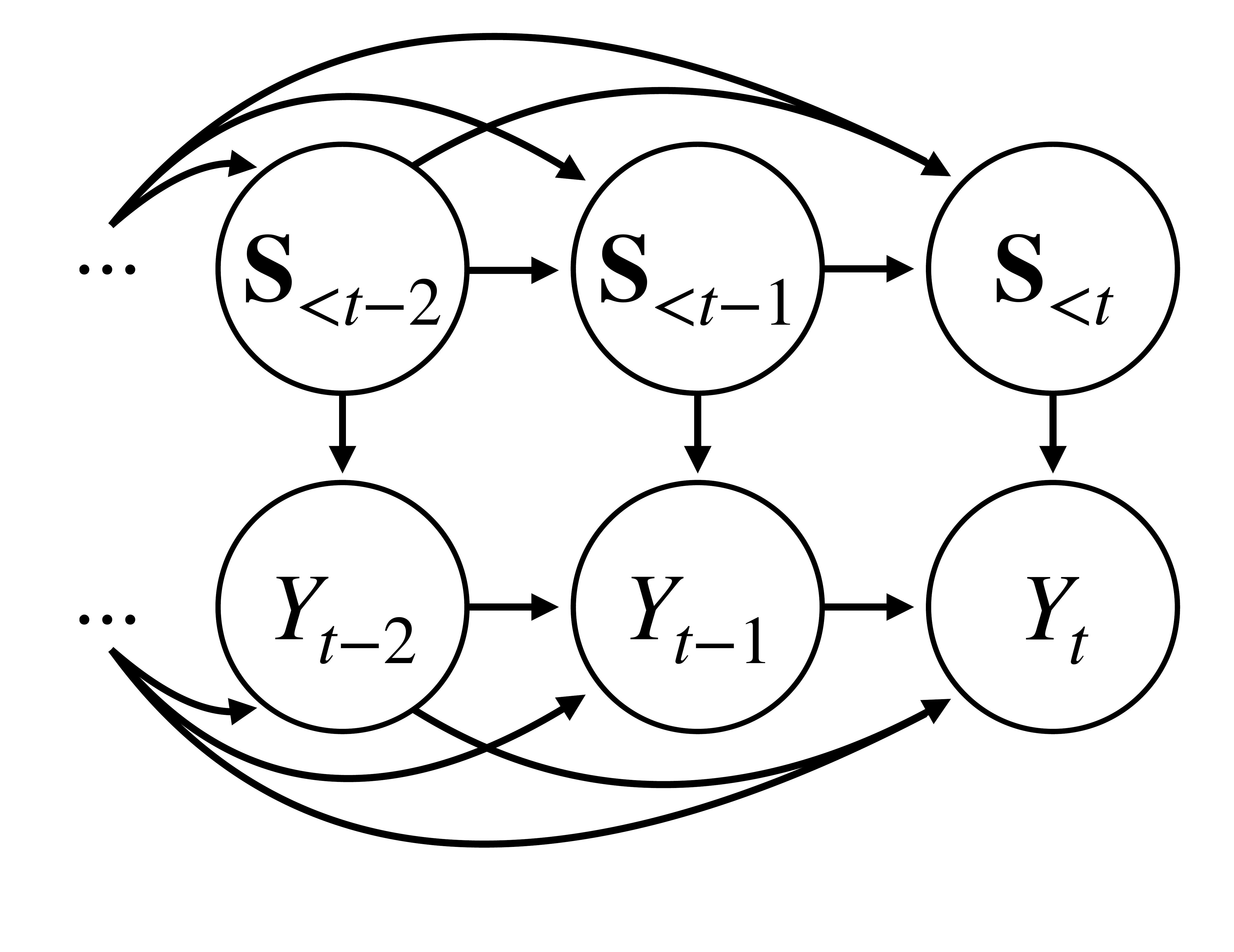}
		\caption{\small Setting C: $Y_t$ conditionally dependent on $Y_{t-1}$ given $\S_{<t}$; $\S_{<t}$ and $Y_t$ are each autocorrelated.}
		\label{fig:setting3}		
	\end{subfigure}
	\caption{\small \textbf{Dependence settings.} Directed acyclic graphs (DAGs) illustrating the three dependence structures we explore. Note that each variable $\S_{<t}$ can itself represent a temporal sequence of high-dimensional functions or images, as in Figure \ref{fig:profiles}.}\label{fig:settings}
\end{figure}

\section{Sequence Data from Tropical Cyclone Satellite Imagery}\label{sec:tcdata}
\begin{figure}[!htbp]
    \centering
    \includegraphics[width=\linewidth]{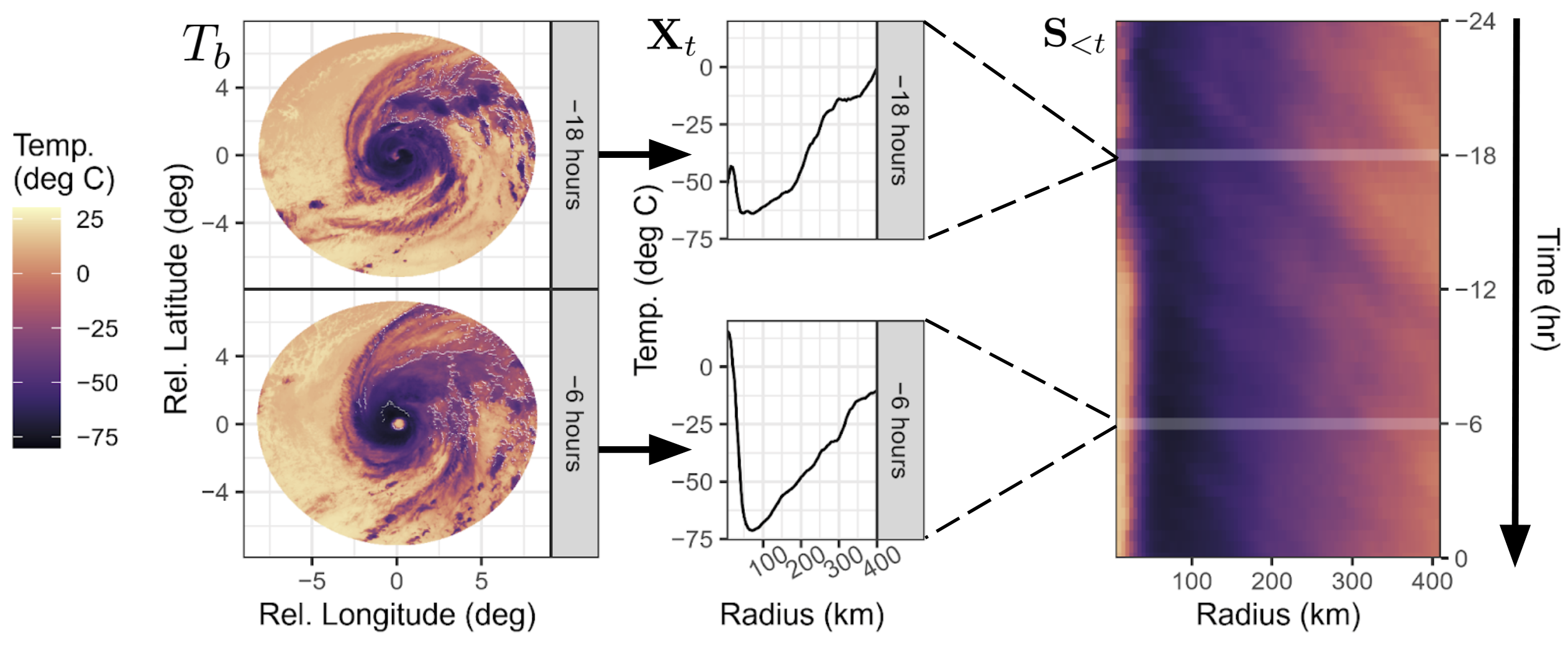}
    
    \caption{\small  \textbf{Evolution of TC convection as structural trajectories.} The raw data for each trajectory $\S_{<t}$ is a sequence of a sequence of TC-centered cloud-top temperature images from GOES ($T_b$). We convert each GOES image into a radial profile ($X_t$). The 24-hour sequence of consecutive radial profiles, sampled every 30 minutes, defines a \emph{structural trajectory} or Hovm\"oller diagram ($\S_{<t}$). These trajectories serve as high-dimensional inputs to  $\widehat{m}_\text{post}(\s_{<t})$.}
    \label{fig:profiles}
\end{figure}

\begin{table}[!htb]
\ra{1.2}
\centering
\begin{tabular}{@{}rl|rrr|rl@{}}\toprule
   &                                & NAL    & ENP  & Total  & Year Range & Years      \\ \hline
(i)& \textbf{Training Data}     &        &      &        &             &            \\
&All Sequences                  & 47,502 &31,549& 79,051 &             &            \\
&RI  Sequences                  &  7,015 & 6,742& 13,757 &             &            \\
&RW Sequences                   &  5,878 & 7,298& 13,176  &           &            \\
&\textbf{Unique TCs}            &  209   & 185  &  394   & 2000-2012   & \textbf{13}\\ \hline
(ii)& \textbf{Test Data}        &        &      &        &             &            \\
&All Sequences                  & 28,368 &32,817& 61,185 &             &            \\
&RI Sequences                   &  3,965 &6,386 & 10,351  &             &            \\
&RW Sequences                   &  3,167 &7,182 & 10,349  &             &            \\
&\textbf{Unique TCs}            &    125 & 152  &  277   &2013-2020    & \textbf{8} \\ \hline
(iii)& \textbf{Synoptic Labels}     &        &      &        &             &            \\ 
&All Labels                         & 14,683 &15,274& 29,957 &             &            \\
&RI Labels                          &  1,850 & 2,462& 4,312  &             &            \\
&RW Labels                          &  1,643 & 2,534&  4,177 &             &            \\
&\textbf{Unique TCs}                &  532   &  589 &  1,121 & 1979-2012   & \textbf{34}\\
\bottomrule
\end{tabular}
\caption{\small \textbf{Sample sizes:} Data set summary for each category, (i) labeled Sequences $(\S_{<t}, Y_t)$ used in training, (ii) unlabeled test sequences $\S_{<t}$, and (iii) synoptic labels $Y_t$  used when complete trajectories are not needed.}
\label{table:sample}
\end{table}

Analysis of TC convective structure relies on two types of observations: sequences of longwave infrared imagery captured by GOES imagers and records of TC intensity and location recorded in NOAA's HURDAT2 database \citep{Landsea2013}. 

Longwave infrared (IR) imagery ($\sim$10.3$\mu m$ wavelength) serves as a proxy for convective strength: where IR-estimated cloud-top temperatures are low, convection is strong. GOES longwave IR imagery is available through NOAA's MERGIR database \citep{mergir} at 30-minute$\times$4-km resolution over both the North Atlantic (NAL) and Eastern North Pacific (ENP) basins from 2000--present. Every thirty minutes during the lifetime of a storm, we download a $\sim$2,000 km$\times$2,000 km ``stamp'' of IR imagery surrounding the TC location. Figure \ref{fig:profiles} (left) shows two such stamps after an 800-km radius mask is applied.

TC location and intensity are given by the NHC's HURDAT2 best track database, which utilizes all available data on each TC (including data not available in real time) to estimate critical characteristics over the lifetime of each TC. HURDAT2 best tracks are provided at 6-hour (or synoptic) time resolution; we linearly interpolate TC latitude and longitude between HURDAT2 data points to estimate TC location at non-synoptic times. Since we are interested in the behavior of mature TCs (as opposed to e.g. early development), we consider TC genesis to be the first synoptic time at which intensity surpasses 35 kt and lysis to be the last synoptic time at which intensity is at least 35 kt.

\textbf{Structural trajectories via {ORB}.}
Direct utilization of IR imagery is challenging in a meteorological setting, where interpretability is at a premium. In previous works, we defined the {ORB} framework (Organization, Radial structure, Bulk morphology) to summarize convective structure into a dictionary of continuous functions instead of using traditional area-averaged features common to TC analyses. This framework enables richer descriptions of spatial structure without sacrificing interpretability \citep{mcneelyquantifying}. We have leveraged the {ORB} framework to analyze the evolution of TC convective structure and demonstrated how projections of ORB functions onto a PCA basis can be used to identify rapid intensification events \citep{mcneely2020unlocking}, but we have not yet directly utilized temporally evolving, continuous functions.

This work studies the temporal evolution of an entire {ORB} function; in this case, the radial profile, defined as $\overline{T}(r)=\frac{1}{2\pi}\int_0^{2\pi}T_b(r,\theta)d\theta$. The radial profile $\overline{T}(r)$ captures the structure of cloud-top temperatures $T_b$ as a function of radius $r$ from the TC center and serves as an easily interpretable description of the depth and location of convection near the TC core \citep{Sanabia2014,mcneely2020unlocking}. The radial profiles are computed at 5-km resolution from 0-400km ($d=80$) (Figure \ref{fig:profiles}, center); we denote these summaries of convective structure at each time $t$ by $\X_t$. Finally, at each time $t$ we stack the preceding 24 hours (48 profiles) into a \emph{structural trajectory} $\S_{<t}=\{\X_t,\X_{t-1},\dots,\X_{t-48}\}$. We visualize these trajectories with Hovm\"oller diagrams (\citealt{hovmoller1949trough}; see Figure \ref{fig:profiles}, right).

\textbf{Labeling TC sequence data.} HURDAT2 contains estimated TC intensities only at synoptic times (0000 UTC, 0600 UTC, 1200 UTC, and 1800 UTC). We thus begin by labeling these points $Y_t\in\{0,1\}$ based on whether the TC was undergoing RI (or RW, for those analyses) at time $t$, where $Y = 1$ indicates occurrence of a rapid intensity change event. We then interpolate to non-synoptic times by assigning label $Y_t=1$ if the an observation falls between two consecutive synoptic $Y_t=1$ observations, and $Y_t=0$ otherwise. See \cite{mcneely2022Supplement} (Section B) for further details on this procedure.

There are three sample sizes of interest in this application: (i) the number of labeled training sequences $\S_{<t}$ (further divided into 60\% train/40\% validation), (ii) the number of test sequences $\S_{<t}$, and (iii) the number of synoptic best track entries used when only labels $Y_t$ are required (e.g., $\widehat{m}_\text{seq}$ in Algorithm  \ref{alg:distributional-differences}(3)). These sample sizes and associated years are summarized in Table \ref{table:sample}.

\section{Methods}\label{sec:methods}
Our TC problem set-up is difficult because of: (i) the complexity of the data themselves, with one observation representing an entire sequence $\S_{<t}$ of functions; (ii) dependence between labels $Y_t$ (and sequences $\S_{<t}$) at nearby time points $t$; and finally (iii) the need for scientific interpretability, or more precisely, statistical findings which are easily digestible by TC scientists and forecasters.

We aim to test hypothesis~\ref{eqn:original_null}  for \DID\ sequence data $\series{(\S_{<t}, Y_t)}$ that satisfy the DAG of Setting~C, Figure~\ref{fig:setting3}. Our approach builds on \cite{kim2019global}, where the authors present a regression approach for detecting differences in high-dimensional \IID\ data $\{(\S_i, Y_i)\}_{i=1}^{n}$, where $\S_i \in \mathcal{S}:=\{\s \in \mathbb{R}^D: p(\s)>0\}$, for $Y_i \in \{0,1\}$. Their set-up is equivalent to Setting~A, Figure~\ref{fig:setting1}. The main idea is to rewrite the two-sample test in the equivalent formulation
\begin{align} 
H_0: \mathbb{P} (Y=1 | \mathbf{S}=\mathbf{s}) &= \mathbb{P} (Y=1), \text{ for all } \mathbf{s} \in \mathcal{S},  \text{ and }\label{eqn:global_hypothesis}\\
H_1: \mathbb{P} (Y=1 | \mathbf{S}=\mathbf{s}) &\neq \mathbb{P} (Y=1), \text{ for some } \mathbf{s} \in \mathcal{S}.\notag
\end{align}
These hypotheses involve a regression function for the ``class posterior'' $m_{\text{post}}(\s):= \mathbb{P} (Y=1 | \mathbf{S}=\mathbf{s})$, and a ``class prior'' $m_{\text{prior}}:= \mathbb{P} (Y=1)$. One then tests $H_0$ against $H_1$ using the test statistic
\begin{equation}
\lambda=\sum_{\s \in \mathcal{V}}\left(\widehat m_{\text{post}}(\s)-\widehat m_{\text{prior}}\right)^2,\label{eqn:test_stat}
\end{equation}
where $\widehat m_{\text{post}}(\s)$ is an estimate of $m_{\text{post}}(\s)$, $\widehat m_{\text{prior}}=\frac{1}{n} \sum_{i =1}^n I(Y_i=1)$ is the class proportion of the {training} sample, and $\mathcal{V} \subset \mathcal{S}$ is a fixed finite set of evaluation data points. Depending on the choice of regression method, the regression test based on $\lambda$ can adapt to challenging non-standard data, like images and sequences of images or functions.

Because the null distribution of $\lambda$ is typically unknown, \cite{kim2019global} compute p-values based on $\lambda$ by using a permutation procedure. The procedure relies on the exchangeability of the labels $Y$ under $H_0$ in Equation~\ref{eqn:global_hypothesis}. However, there are at least two types of dependence in our data $\series{(\S_{<t},Y_t)}$ which violate the assumption of exchangeability: (i) autocorrelation in $\series{Y_t}$ which is inherent or governed by unobserved quantities as in Setting C, Figure~\ref{fig:setting3}, and (ii) the presence of an observed, correlated confounding sequence $\series{\Z_t}$ (discussed in Section \ref{sec:confounding}). In either case, the theoretical guarantees of a valid test in \cite{kim2019global} no longer hold.

\subsection{Accounting for Dependence in $Y_t|\S_{<t}$}
\label{seq:accounting_for_dependencies} How do we handle dependence in the labeled sequence data $\series{(\S_{<t},Y_t)}$?
Permutation tests essentially model the distribution of $\series{Y_t}$ assuming \IID\ labels. One way to admit dependence in the relabeling procedure is to instead assume a Markov property of order $k$ on $\series{Y_t}$; that is, to assume that the random variable $Y_t$ depends only on the previous $k$ variables $Y_{t-1},\dots,Y_{t-k}$. To estimate the null distribution of the test statistic $\lambda$ (\ref{eqn:global_hypothesis}), we draw new labels from a Markov autoregressive model:
\begin{equation}
    \Ytilde_t \sim \text{Bernoulli}\left(\widehat{\P}(Y_t=1\vert Y_{t-1}=\Ytilde_{t-1},\dots,Y_{t-k}=\Ytilde_{t-k})\right).\label{eqn:markov-marginal}
\end{equation}
The marginal distribution of the labels, denoted by $m_\text{seq}$, can be estimated from a holdout sample of observed data $\series{Y_t}$ with a variety of methods, including binary Markov chains and random forests. As we shall see, as long as the marginal estimate of $\{Y_t\}_{t\geq 0}$ converges in distribution to the true data-generating process as the size of the holdout sample increases, then the bootstrap test detailed in Algorithm \ref{alg:distributional-differences} will be asymptotically valid. The result holds even if $m_{\text{post}}(\s)$ and $m_{\text{prior}}$ are not well-estimated. This is good news for TC analysis, as one usually has ample access to label series data $Y_t$, whereas sample sizes for the sequences $\S_{<t}$ derived from high-resolution satellite images are smaller.

Proof of the validity of our bootstrap test for Equation \ref{eqn:global_hypothesis} 
is given in Section \ref{sec:theory}, Theorem \ref{thm:convergence_test_stat}. Section~\ref{sec:synthetic} includes empirical results on the power of the test for synthetic data with the DAG structures in Figure~\ref{fig:settings}.

\begin{algorithm}[!b] 
	\caption{Test for distributional differences in labeled sequence data.}
	\label{alg:distributional-differences}
	{\small
		\begin{algorithmic} 
			\Require type of test (\texttt{BST}=TRUE for bootstrap test; \texttt{BST}=FALSE for permutation test); train data $\{(\S_{<t}, Y_t)\}_{t \in \mathcal{T}_1}$ and regression method for estimating $m_\text{post}(\s):= \mathbb{P} (Y_t=1 | \mathbf{S}_{<t}=\mathbf{s})$; for bootstrap test: train data $\{Y_t\}_{t \in \mathcal{T}_2}$ and regression method for estimating $m_\text{seq}(Y_{t-1},\dots,Y_{t-k}):=\P(Y_t=1|Y_{t-1},\dots,Y_{t-k})$; number of repetitions $B$; evaluation points $\mathcal{V}$.
			
			\Ensure p-value for testing $H_0: p (\s | Y=1) = p (\s | Y=0) \ \text{for all} \ s\in \mathcal{S}$; local posterior differences $\{\lambda(\s)\}_{\s \in\mathcal{V}}$ at the evaluation points $\s \in \mathcal{S}$.
			\vspace{0.7em}\\
			
			\hspace{-2ex}\tcp{\texttt{Estimate underlying probability distributions}}\vspace{0.25em}
			
			\State (1) Estimate $m_\text{prior} := \P(Y_t = 1)$ with 
			class proportion $\widehat{m}_\text{prior}=\frac{1}{\lvert\mathcal{T}_1\rvert} \sum_{t\in\mathcal{T}_1}Y_t$.\vspace{0.25em}
			
			\State (2) Regress $Y_t$ on $\S_{<t}$ using $\mathcal{T}_1$ to compute $\widehat{m}_\text{post}$. \vspace{0.25em}
			
			\State (3) \If{\texttt{BST}}{Regress $Y_t$ on $Y_{t-1},\dots,Y_{t-k}$ using $\mathcal{T}_2$ 
				to compute $\widehat{m}_\text{seq}$.}
			
			\hspace{-2ex}\tcp{\texttt{Compute test statistic and estimate its null distribution}}	\vspace{0.25em}
			
			\State (4) Compute test statistic $\mathbf{\lambda}=\sum_{\s \in\mathcal{V}}\lambda^2(\s),\text{ where }\lambda(\s)=\widehat{m}_\text{post}(\s)-\widehat{m}_\text{prior}$.\vspace{0.25em}
			
			\State (5) \For{$b\in \{1,2,\dots,B\}$}{
				\hspace{2ex}\State - Draw new train labels $\{\widetilde{Y}_t\}_{t\in\mathcal{T}_1}$ under $H_0$: 
				
				\hspace{-1ex}\eIf{\texttt{BST}}{
					\State Draw an initial label sequence $\widetilde{Y}_1,\dots,\widetilde{Y}_k$ from the empirical distribution.
					\State Draw sequence of length 100$\times k$ from $\widetilde{Y}_t\sim\text{Binom}(\widehat{m}_\text{seq}(\widetilde{Y}_{t-k},\dots,\widetilde{Y}_{t-1}))$ for burn-in.
					\State Draw new labels $\widetilde{Y}_t\sim\text{Binom}(\widehat{m}_\text{seq}(\widetilde{Y}_{t-k},\dots,\widetilde{Y}_{t-1}))$ for $t \in \mathcal{T}_1$.
				}{  \State Permute original labels $\{Y_t\}_{t \in \mathcal{T}_1}$.}
				
				\State - Regress $\widetilde{Y}_t$ on $\S_{<t}$ using $\mathcal{T}_1$ to compute $\widehat{m}_\text{post}^{(b)}$.\vspace{0.25em}
				
				\State - Recompute test statistic $\widetilde{\lambda}^{(b)}=\sum_{\s \in \mathcal{V}}\left(\widetilde{\lambda}^{(b)}(\s)\right)^2, \text{ where }\widetilde{\lambda}^{(b)}(\s)= \widehat{m}^{(b)}_\text{post}(\s)-\widehat{m}_\text{prior}$. 
			}
			
			\hspace{-2ex}\tcp{\texttt{Compute approximate p-value}}	\vspace{0.25em}
			
			\State (6) Compute $p$-value according to 
			{$$\widehat{p} = \frac{1}{B+1} \left( 1 + \sum_{b=1}^B \I\left(\widetilde{\lambda}^{(b)} > \lambda\right) \right).$$}
			
			\State \textbf{return} $\widehat{p},\  \{\lambda(\s)\}_{\s \in\mathcal{V}}$
			
		\end{algorithmic}
	}
\end{algorithm}

\subsection{Local Diagnostics}
Suppose $H_0$ is rejected. That is, we detect that the two distributions $p(\s|Y=0)$ and $p(\s|Y=1)$ are indeed different. How do we then provide the scientist with interpretable diagnostics that explain how the two distributions are different?
 
Classification accuracy tests \citep{kim2021classification} require a separate post-hoc procedure to identify local distributional differences \citep{gretton2012kernel,chakravarti2021model}. A key advantage of the regression test is that the test statistic in Equation \ref{eqn:test_stat}, by construction, is a sum of local posterior differences. Indeed,
for each evaluation point $\s \in \mathcal{V}$, we compute the {\em local posterior difference} (LPD) 
\begin{equation}
    \lambda(\s)=\widehat{m}_\text{post}(\s)-\widehat{m}_\text{prior}. \label{eqn:local-importance}
\end{equation}
A large value of $\lambda(\s)$ indicates that the distributions  $p(\s|Y=0)$ and $p(\s|Y=1)$ are very different at $\s \in \mathcal{V}$, which in turn contributes to a larger test statistic $\lambda=\sum_{\s \in \mathcal{V}}\lambda^2(\s)$ and potential rejection of $H_0$.

\begin{remark}
The posterior difference $\lambda(\s)=\P(Y=1|\s)-\P(Y=1)$ can be viewed as a {\em scaled density difference}: 
\begin{equation}\lambda(\s) = \frac{p(\s|Y=1)-p(\s|Y=0)}{w(\s)}
\label{eq:scaled_density_difference}
\end{equation}
where ${w(\s)=\frac{1}{1-\pi} \ p(\s|Y=1) + \frac{1}{\pi} \ p(\s|Y=0)}$ is a positive scaling function, and ${\pi:=\P(Y=1)}$ denotes the prior class probability or ${m}_\text{prior}$. This difference
 has several desired properties for assessing local distributional differences: $\lambda(\s)$ is always bounded, unlike other popular discrepancy measures such as the density ratio, $p(\s|Y=1)/p(\s|Y=0)$, and the density difference,  $p(\s|Y=1)-p(\s|Y=0)$, itself. Furthermore, the posterior difference does not decay to zero as fast as $p(\s)$, which leads to high sensitivity to detect differences in low density regions; for example, in the case of balanced classes,  $\lambda(\s)$ takes a value of $+\frac{1}{2}$ when $p(\s|Y=1)\gg p(\s|Y=0)$, and a value of $-\frac{1}{2}$ when $p(\s|Y=1)\ll p(\s|Y=0)$, regardless of the actual magnitudes of $p(\s)$, $p(\s|Y=1)$, and $p(\s|Y=0)$.  
 \end{remark}

In summary, our method tests for distributional differences in labeled sequence data as follows: 
\begin{enumerate}
    \item Decide on a suitable model for the marginal distribution of $\{Y_t\}_{t\ge0}$.
    \item Apply Algorithm \ref{alg:distributional-differences} to compute the  p-value for testing the hypotheses in Equation~\ref{eqn:global_hypothesis}.
    \item If $H_0$ is rejected, then examine local posterior differences (\ref{eqn:local-importance}) to identify what patterns $\s$ in the state space $\mathcal{S}$ of sequence data contributed the most to the rejection.
\end{enumerate}

\section{Theory}\label{sec:theory}
This section provides  theoretical justification for our bootstrap procedure for testing Equation \ref{eqn:global_hypothesis}.  In particular, we show that Algorithm \ref{alg:distributional-differences} controls the type I error asymptotically.

\begin{Assumption}[\textbf{Stationary sequence}]
\label{assump:stationary}
$\{(\S_{<t},Y_t)\}_{t \geq 0}$ is a stationary sequence, where $\S_{<t} \in \mathcal{S}$ and $Y_t \in \{0,1\}$
\end{Assumption}

Assumption \ref{assump:stationary} is needed  for the hypothesis in Equation \ref{eqn:new_null}
to be well-defined. 

\begin{Assumption}[\textbf{Conditional independence}]
\label{assump:DAG_C}
$\{(\S_{<t},Y_t)\}_{t \geq 0}$ satisfies the DAG of Setting C (Figure \ref{fig:settings}).
\end{Assumption}

Assumption \ref{assump:DAG_C} encodes the conditional independences required for our method to control type I error.

In this section, we denote the test statistic by
\begin{align}
\label{eq:test_statistic}
\lambda(\mathcal{D})=\int (\widehat m_{\text{post}}(\s)-\widehat m_{\text{prior}})^2 dQ(\s),    
\end{align}
where $Q$ is any fixed measure over $\mathcal{S}$,
and  $\widehat m_{\text{post}}$ and $\widehat m_{\text{prior}}$ are obtained using a training set $\mathcal{D}:=\{(\S_{<t},Y_t)\}_{t \in \mathcal{T}_{1}}$. 
In Algorithm~\ref{alg:distributional-differences}, $Q$ is the distribution that assigns mass $1/|\mathcal{V}|$ for each evaluation  point $\s \in \mathcal{V}$, but the results we show here apply to any $Q$.
We also assume that the regression estimator we use is a continuous function of the training set.
\begin{Assumption}[\textbf{Continuous regression method}]
\label{assump:continuos_regression}
$\widehat m_{\text{post}}$ is obtained by applying a  regression estimator that is a continuous function of $\mathcal{D}$.
\end{Assumption}

Moreover, let $\mathcal{D}^{t_2}_0:=\{(\S_{<t},Y^0_t)\}_{t \in \mathcal{T}_{\text{1}}}$ denote a random draw from the data set used to estimate the null distribution of $\lambda$, where 
$\{Y^0_t\}_{t \in \mathcal{T}_{\text{1}}} \sim \text{G}_{\widehat \p_{t_2}}$ and  $\text{G}$ is a
distribution over $\{0,1\}^{|\mathcal{T}_{\text{1}}|}$ indexed by the parameter $\widehat \p_{t_2}$, which is estimated   using a holdout set 
$\mathcal{D}'=\{Y_t\}_{t \in \mathcal{T}_{\text{2}}}$, with $t_2=|\mathcal{T}_{\text{2}}|$. In the method
described in Section \ref{seq:accounting_for_dependencies}, $\text{G}_{\widehat \p_{t_2}}$ is the Markov autoregressive model $\widehat m_{\text{seq}}$, but this model can be more general. We require it to converge to the true distribution of the marginal process $\{Y_t\}_{t \in \mathcal{T}_{1}}$ when the null hypothesis holds:
\begin{Assumption}[\textbf{Consistency of the marginal distribution estimator}]
\label{assump:converge_marginal}
The estimator $\widehat \p_{t_2}$ is such that, if  the null hypothesis is true,
$$\text{G}_{\widehat \p_{t_2}} \xrightarrow[t_2 \longrightarrow\infty]{\enskip \mbox{Dist} \enskip} \text{G}^*,$$
where $\text{G}^*$ is the true generating process of $\{Y_t\}_{t \in \mathcal{T}_{1}}$.
\end{Assumption}

In the following, we show two examples where Assumption \ref{assump:converge_marginal} holds.

\begin{Example}
\label{ex:bernoulli} Under Settings A and B (Figure \ref{fig:settings}), $Y_t$'s
are \IID \ under the null hypothesis. Thus, $G^*$ is necessarily a product of \IID \ Bernoulli random variables with some parameter $p$. Now, let
$G_{\widehat \p_{t_2}}$ be the product of \IID\  Bernoulli random variables with parameter given by $p_{t_2}:=(t_2)^{-1}\sum_{t \in   \mathcal{T}_{2}} Y_t$. The law of large numbers implies that 
 $p_{t_2} \xrightarrow[t_2 \longrightarrow\infty]{\enskip \mbox{a.s.} \enskip} p$.
 Thus, the cumulative distribution function of $Y_t^0$, given by
 $$F_{Y_t^0}(y_t)=\begin{cases}
    0, & \text{if $y_t<0$}\\
    1-p_{t_2}, & \text{if $0\leq y_t<1$}\\
    1 & \text{otherwise}
  \end{cases}$$
 is such that $F_{Y_t^0}(y_t) \xrightarrow[t_2 \longrightarrow\infty]{\enskip \enskip} F_{Y_t}(y_t) $.
 It follows that
 \begin{align*}
    \P\left(Y^0_t  \leq y_t, \  \forall t \in  \mathcal{T}_{1} \right)&=\Pi_{t \in \mathcal{T}_1} F_{Y_t^0}(y_t) \\ &\xrightarrow[t_2 \longrightarrow\infty]{\enskip \enskip} \Pi_{t \in \mathcal{T}_1} F_{Y_t}(y_t) 
     =  \P\left(Y_t  \leq y_t, \  \forall t \in  \mathcal{T}_{1} \right)
\end{align*}
and therefore Assumption \ref{assump:converge_marginal} holds.
\end{Example}

\begin{Example}[Markov Chain] If (under $H_0$) the process $\{Y_t\}_{t \geq 0}$ is an irreducible and ergodic stationary $k$-order Markov Chain, then the maximum likelihood estimators of the transition probabilities converge almost surely to the true transition probabilities \citep{grimmett2020probability}. The same reasoning of Example \ref{ex:bernoulli} then implies that Assumption \ref{assump:converge_marginal} holds for such estimator under Setting C.
\end{Example}

The following theorem shows that, under $H_0$, the test statistic has approximately the 
same distribution as the test statistic evaluated at the 
generated data $\mathcal{D}^{t_2}_0$.
\begin{thm}
\label{thm:convergence_test_stat}
Assume \ref{assump:stationary}, \ref{assump:DAG_C}, \ref{assump:continuos_regression} and \ref{assump:converge_marginal}.
Under the null hypothesis,  
$$\lambda(\mathcal{D}^{t_2}_0) \xrightarrow[t_2 \longrightarrow\infty]{\enskip \mbox{Dist} \enskip} \lambda(\mathcal{D})$$
\end{thm}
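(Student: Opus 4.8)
\emph{Strategy and notation.} Write $n:=|\mathcal{T}_1|$ and abbreviate the training covariate and label blocks by $\S_{1:n}:=\{\S_{<t}\}_{t\in\mathcal{T}_1}$, $Y_{1:n}:=\{Y_t\}_{t\in\mathcal{T}_1}$, and let $Y^0_{1:n}:=\{Y^0_t\}_{t\in\mathcal{T}_1}$ be the resampled labels, so $\mathcal{D}=(\S_{1:n},Y_{1:n})$ and $\mathcal{D}^{t_2}_0=(\S_{1:n},Y^0_{1:n})$ share the covariate block and the regression method. The plan is: (i) realise $\lambda$ as a functional $\lambda(\cdot)=\phi(\S_{1:n},\cdot)$ of the labels that is continuous in the covariate argument; (ii) show $(\S_{1:n},Y^0_{1:n})\xrightarrow{\text{Dist}}(\S_{1:n},Y_{1:n})$ under $H_0$; and (iii) conclude by the continuous mapping theorem. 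The key simplification is that $n$ is held fixed while $t_2\to\infty$, so the labels live in the fixed \emph{finite} set $\{0,1\}^n$, and convergence in distribution of the label blocks reduces to entrywise convergence of their probability mass functions.

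\emph{Independence of labels and covariates under the null.} Under $H_0$ together with Assumption~\ref{assump:DAG_C} one has $Y_{1:n}\ind\S_{1:n}$: in the Setting-C DAG the only edges from the covariate process into the label process are $\S_{<t}\to Y_t$, and under the null (equivalently $\P(Y_t=1\mid\S_{<t})=\P(Y_t=1)$ together with the DAG's conditional independences) these edges are uninformative, so the label process evolves autonomously; as no $Y$ is an ancestor of any $\S$, a short induction over $t$ upgrades this to independence of the full blocks, with $Y_{1:n}\sim G^*$. For the resampled data, $Y^0_{1:n}\sim G_{\widehat\p_{t_2}}$ where $\widehat\p_{t_2}$ depends only on the holdout $\mathcal{D}'=\{Y_t\}_{t\in\mathcal{T}_2}$, which is independent of $\S_{1:n}$; conditioning on $\mathcal{D}'$ therefore gives $Y^0_{1:n}\ind\S_{1:n}$, and hence also marginally.

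\emph{Convergence and conclusion.} By Assumption~\ref{assump:converge_marginal}, $G_{\widehat\p_{t_2}}(\{y\})\to G^*(\{y\})$ for every $y\in\{0,1\}^n$ (in probability, or a.s.\ as in Example~\ref{ex:bernoulli} and the subsequent Markov-chain example); averaging over $\mathcal{D}'$ and using boundedness of these probabilities, $\P(Y^0_{1:n}=y)=\E[G_{\widehat\p_{t_2}}(\{y\})]\to G^*(\{y\})=\P(Y_{1:n}=y)$. Then for any bounded continuous $h$, the independence above gives $\E[h(\S_{1:n},Y^0_{1:n})]=\sum_{y\in\{0,1\}^n}\P(Y^0_{1:n}=y)\,\E[h(\S_{1:n},y)]$, a finite sum which converges to $\sum_y\P(Y_{1:n}=y)\,\E[h(\S_{1:n},y)]=\E[h(\S_{1:n},Y_{1:n})]$; hence $(\S_{1:n},Y^0_{1:n})\xrightarrow{\text{Dist}}(\S_{1:n},Y_{1:n})$. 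Finally, $\widehat m_{\text{prior}}=n^{-1}\sum_{t\in\mathcal{T}_1}Y_t$ is linear in the labels, $\widehat m_{\text{post}}$ is continuous in $\mathcal{D}$ by Assumption~\ref{assump:continuos_regression}, and $\lambda=\int(\widehat m_{\text{post}}-\widehat m_{\text{prior}})^2\,dQ$ is a continuous functional of these (immediate when $Q$ is supported on the finite set $\mathcal{V}$, as in the algorithm, and more generally for $Q$ finite); composing gives $\lambda(\cdot)=\phi(\S_{1:n},\cdot)$ with $\phi$ continuous in the covariate argument for each of the $2^n$ label configurations, since $\{0,1\}^n$ is discrete. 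The continuous mapping theorem then yields $\lambda(\mathcal{D}^{t_2}_0)=\phi(\S_{1:n},Y^0_{1:n})\xrightarrow{\text{Dist}}\phi(\S_{1:n},Y_{1:n})=\lambda(\mathcal{D})$, as claimed.

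\emph{Main obstacle.} I expect the crux to be the independence step and, especially, propagating it through the resampling: since $\widehat\p_{t_2}$ is itself random, one cannot simply substitute the limiting law but must condition on the holdout, use independence to reduce the relevant expectation to a finite sum against the conditionally deterministic fitted pmf, pass to the limit inside that finite sum, and only then integrate out $\mathcal{D}'$. A secondary, routine matter is making the continuity of $\phi$ precise in the topology under which Assumption~\ref{assump:continuos_regression} is phrased; this is immediate when the regression estimator varies continuously with finitely many parameters and $Q$ (equivalently $\mathcal{V}$) is fixed.
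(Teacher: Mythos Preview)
Your proposal is correct and follows essentially the same route as the paper's proof: establish that under $H_0$ the covariate block $\S_{1:n}$ is independent of both the true labels $Y_{1:n}$ (via Assumption~\ref{assump:DAG_C}) and the resampled labels $Y^0_{1:n}$ (by construction of the bootstrap), invoke Assumption~\ref{assump:converge_marginal} to get $\mathcal{D}^{t_2}_0\xrightarrow{\text{Dist}}\mathcal{D}$, and finish with the continuous mapping theorem using Assumption~\ref{assump:continuos_regression}. Your version is in fact somewhat more explicit than the paper's in two places --- you spell out the passage from the random fitted law $G_{\widehat\p_{t_2}}$ to the marginal law of $Y^0_{1:n}$ via bounded convergence, and you exploit the finiteness of $\{0,1\}^n$ to reduce joint convergence to a finite sum over label configurations --- but the skeleton of the argument is the same.
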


It follows from Theorem \ref{thm:convergence_test_stat} that 
type I error is controlled  asymptotically{.}
\begin{Corollary}[\textbf{Type I error control}]
\label{cor:typeI}
Let 
$$\widehat p^{\ t_2}_B(\mathcal{D}):=\frac{1}{B+1} \left(  1+\sum_{b=1}^B \I\left(\lambda (\mathcal{D}^{(b)}) >\lambda(\mathcal{D}) \right) \right)$$ be the Monte Carlo p-value for $H_0$, where $\mathcal{D}^{(1)},\ldots,\mathcal{D}^{(B)} \stackrel{\text{\IID}}{\sim}  \mathcal{D}^{t_2}_0$. Assume that Assumptions \ref{assump:stationary}, \ref{assump:DAG_C}, \ref{assump:continuos_regression} and \ref{assump:converge_marginal} hold. Then, under the null hypothesis, for any $0<\alpha<1$,
$$\lim_{t_2 \longrightarrow \infty}\lim_{B \longrightarrow \infty} \P\left( \widehat p^{\ t_2}_B(\mathcal{D})\leq \alpha \right) =\alpha.$$
\end{Corollary}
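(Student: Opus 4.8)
The statement is a double limit, evaluated in the written order: first send $B\to\infty$ with the holdout size $t_2$ fixed, then send $t_2\to\infty$. The plan is therefore to identify what the Monte Carlo $p$-value converges to as $B\to\infty$, and then to show that this limit is asymptotically $\mathrm{Uniform}(0,1)$ as $t_2\to\infty$, using Theorem~\ref{thm:convergence_test_stat} together with the probability integral transform.

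\textbf{First, the inner limit ($B\to\infty$).} I would condition on the observed training set $\mathcal{D}$ and the holdout $\mathcal{D}'=\{Y_t\}_{t\in\mathcal{T}_2}$, which together fix both the observed statistic $\lambda(\mathcal{D})$ and the estimated parameter $\widehat{\p}_{t_2}$. Given this conditioning, the draws $\mathcal{D}^{(1)},\dots,\mathcal{D}^{(B)}\stackrel{\text{\IID}}{\sim}\mathcal{D}^{t_2}_0$ are i.i.d., so the indicators $\I(\lambda(\mathcal{D}^{(b)})>\lambda(\mathcal{D}))$ are i.i.d.\ Bernoulli with common success probability
\[
q_{t_2}:=\P\big(\lambda(\mathcal{D}^{(1)})>\lambda(\mathcal{D})\,\big|\,\mathcal{D},\mathcal{D}'\big).
\]
By the strong law of large numbers the empirical fraction converges almost surely to $q_{t_2}$; since the additive $1$ and the $1/(B+1)$ normalization are asymptotically negligible, $\widehat p^{\,t_2}_B(\mathcal{D})\to q_{t_2}$ a.s.\ as $B\to\infty$. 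Provided $\alpha$ is a continuity point of the law of $q_{t_2}$, the indicators $\I(\widehat p^{\,t_2}_B(\mathcal{D})\le\alpha)$ then converge a.s.\ to $\I(q_{t_2}\le\alpha)$, and bounded convergence yields $\lim_{B\to\infty}\P(\widehat p^{\,t_2}_B(\mathcal{D})\le\alpha)=\P(q_{t_2}\le\alpha)$.

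\textbf{Next, the outer limit ($t_2\to\infty$).} It remains to show $\P(q_{t_2}\le\alpha)\to\alpha$, i.e.\ that $q_{t_2}$ is asymptotically $\mathrm{Uniform}(0,1)$. Writing $H_{t_2}$ for the conditional CDF of the bootstrap statistic $\lambda(\mathcal{D}^{(1)})$ given $(\mathcal{D},\mathcal{D}')$, one has $q_{t_2}=1-H_{t_2}(\lambda(\mathcal{D}))$ up to the mass $H_{t_2}$ places at $\lambda(\mathcal{D})$. Theorem~\ref{thm:convergence_test_stat}—whose hypotheses are exactly Assumptions~\ref{assump:stationary}--\ref{assump:converge_marginal}, with Assumption~\ref{assump:converge_marginal} forcing $G_{\widehat{\p}_{t_2}}\Rightarrow G^*$—guarantees that under $H_0$ the bootstrap statistic converges in distribution to the observed statistic, so $H_{t_2}$ converges to the true null CDF $F$ of $\lambda(\mathcal{D})$. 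Combining this with the probability integral transform, namely $F(\lambda(\mathcal{D}))\sim\mathrm{Uniform}(0,1)$ when $F$ is continuous, gives $q_{t_2}\Rightarrow 1-F(\lambda(\mathcal{D}))\sim\mathrm{Uniform}(0,1)$, and hence $\P(q_{t_2}\le\alpha)\to\alpha$ for every $\alpha\in(0,1)$. Chaining the two limits yields the claim.

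\textbf{Main obstacle.} The hard part will be the outer limit, and specifically the probability integral transform step: exact (rather than merely conservative) level requires the limiting null law $F$ of $\lambda$ to be atom-free, since with point masses $F(\lambda(\mathcal{D}))$ is only stochastically bounded below by a uniform and the strict inequality in $\widehat p^{\,t_2}_B$ then yields only $\limsup_{t_2}\P(q_{t_2}\le\alpha)\le\alpha$. I would argue continuity of $F$ from Assumption~\ref{assump:continuos_regression}: because $\widehat m_{\text{post}}$, and hence $\lambda(\mathcal{D})=\int(\widehat m_{\text{post}}(\s)-\widehat m_{\text{prior}})^2\,dQ(\s)$, is a continuous functional of the continuously distributed covariate sequences $\{\S_{<t}\}$, its marginal law carries no atoms even though the labels are binary. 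The second delicate point is that $H_{t_2}$ is a \emph{conditional} law built on the very covariates that also enter $\lambda(\mathcal{D})$; making the weak convergence $H_{t_2}\Rightarrow F$ interact rigorously with the transform—rather than informally treating the bootstrap and observed statistics as independent—is the technical heart, and is precisely where Theorem~\ref{thm:convergence_test_stat}, which already absorbs the shared-covariate structure through the continuity of the regression map, does the essential work.
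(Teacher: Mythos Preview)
Your proposal is correct and follows essentially the same route as the paper: apply the law of large numbers for the inner limit $B\to\infty$ to reduce the Monte Carlo $p$-value to the bootstrap survival function $\P(\lambda(\mathcal{D}^{t_2}_0)>\lambda(\mathcal{D}))$, invoke Theorem~\ref{thm:convergence_test_stat} for the outer limit $t_2\to\infty$ to replace the bootstrap CDF by the true null CDF $F_{\lambda(\mathcal{D})}$, and finish with the probability integral transform. Your discussion of the atom-free requirement on $F$ is in fact more careful than the paper's own argument, which simply asserts that $F_{\lambda(\mathcal{D})}(\lambda(\mathcal{D}))$ is uniform without addressing continuity.
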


See \cite{mcneely2022Supplement} (Section C) for proofs.

\section{Performance of Tests on Synthetic Data}\label{sec:synthetic}
In this section, we use synthetic data to examine the performance (validity, power and diagnostics) of our Markov chain bootstrap test for the data dependence settings in Figure~\ref{fig:setting3}.

\subsection{Synthetic Univariate Sequence Data}\label{sec:synth}
For simplicity, we first consider a scalar covariate $S_t$ of interest. We then create dependent sequences $\{S_t, Y_t\}_{t\ge0}$ with a logistic generative model:
\begin{align}
Y_t|S_t\sim&\,\text{Bernoulli}(p_t)\notag\\
p_t=\,&\text{logistic}(\gamma H_\delta(S_t) + U_t)\notag\\
H_\delta(S)=&
\begin{cases}
0&\lvert S\rvert<\delta\\
S&\lvert S\rvert\ge\delta
\end{cases}\label{eqn:synthetic}\\
S_t =& U_t^\prime, \ \ U_t^\prime \sim  \text{AR}_{\phi'}(1)\notag \\
U_t \sim& \text{AR}_{\phi}(1).\notag
\end{align}

The variables $U_t$ and $U'_t$ are spurious variables (not included in the DAGs), which induce autocorrelation in the binary response variable $Y_t$ and the covariate $S_t$, respectively. In our toy example, we assume that both $U_t$ and $U'_t$ are given by autoregressive models of order $1$; more specifically, by AR(1) models of the form $U_t = \phi U_{t-1} + \sqrt{1-\phi^2}\epsilon_t$ and $U'_t = \phi^\prime U'_{t-1} + \sqrt{1-\phi^{\prime2}}\epsilon'_t$,  where $\epsilon_t, \epsilon'_t \overset{iid}{\sim}N(0,1)$, and $\phi, \phi^\prime {\in [0,1]}$ are parameters for the 1-lag autocorrelation in $U_t$ and $U'_t$, respectively. Increasing $\phi^\prime$ thus increases the autocorrelation (but not the variance) of the variable of interest $S_t$, while increasing $\phi^\prime$ increases the autocorrelation (but not the variance) of the spurious variable $U_t$.

The parameter $\gamma \geq 0$ determines the signal strength, or the strength of the dependence of $Y_t$ on $S_t$. Testing $H_0$ in Equation \ref{eqn:original_null} is equivalent to testing $H_0: \gamma=0$. Ideally, our method should also be able to identify local regions in the sample space $\mathcal{S}$ where the two distributions are different or the same. To assess our method's local performance, we hence include a hard thresholding operator $H_\delta(\cdot)$ in Equation~\ref{eqn:synthetic}, which, regardless of the signal strength $\gamma$, creates a region in $\mathcal{S}=\mathbb{R}$ where $\mathbb{P} (Y=1 | S=s) = \mathbb{P} (Y=1)$ for $s \in (-\delta, \delta)$.

The two parameters $\phi$ and $\phi^\prime$ allow us to create synthetic data with the dependence structures in Figure \ref{fig:settings}. More specifically,\\
{\bf Setting A:} $\phi^\prime=\phi=0$\\
\textbf{Setting B:}
$\phi^\prime>0$ induces autocorrelation in $\series{S_t}$ via $U'_t$; $\phi=0$\\
\textbf{Setting C:} $\phi^\prime>0$ induces autocorrelation in $\series{S_t}$ via $U^\prime_t$, while $\phi>0$ induces autocorrelation in $\series{Y_t}$ via $U_t$.
\vspace{.5em}

\begin{remark} In this toy example, $S_t=U'_t$ with $U'_t$ observed. More generally, $\S_{<t}$ can depend on unmeasured variables $\mathbf{U}'_t$, as well as confounding variables $\Z_t$, which are connected to both $\S_{<t}$ and $Y_t$. The latter setting is discussed in Section \ref{sec:confounding}. We also note that the spurious variables $\mathbf{U}_t$ and $\mathbf{U}'_t$ can have more complex temporal dependence (than in the example), as indicated by the fully connected sequences in Figure \ref{fig:setting3}. 
\end{remark}

\subsection{Test Results for Synthetic Data}
The logistic generative model in Equation~\ref{eqn:synthetic} provides a variety of controls over the dependence structure of $\series{(S_t,Y_t)}$. For our synthetic experiments, we implement Algorithm~\ref{alg:distributional-differences} with either the MC bootstrap or the permutation test for $\lvert\mathcal{V}\rvert=250$ evaluation points. We estimate the regression function $m_\text{post}(s)= \mathbb{P} (Y_t=1 | {S}_t={s})$ using train data $\{(S_t, Y_t)\}_{t \in \mathcal{T}_1}$ and a Nadaraya-Watson (NW) kernel estimator with an Epanechnikov kernel and the bandwidth chosen as the sample standard deviation of $s$ divided by $\lvert\mathcal{T}_1\rvert^{1/5}$ \citep{li2007nonparametric}. For the bootstrap test, we estimate the label distribution $m_{\text{seq}}$ using an order $k=4$ Markov chain and labels $\{Y_t\}_{t \in \mathcal{T}_2}$.

\textbf{Validity:} Testing $H_0$ in Equation \ref{eqn:original_null} is equivalent to testing $H_0: \gamma=0$ (no signal strength). To examine validity, we set $\gamma=0$ and simulate 500 independent data sets for each experiment, or combination of ``setting'' (A, B, C) and ``test method'' (permutation or MC bootstrap test with $\lvert\mathcal{T}_1\rvert=\lvert\mathcal{T}_2\rvert=\lvert\mathcal{V}\rvert=250$). That is, each experiment returns 500 p-values. If the test controls type I error, we expect these p-values to be approximately uniformly distributed. Figure \ref{fig:setting-C-QQ} assesses validity by plotting the difference between the empirical and (uniform) theoretical quantiles against the theoretical quantiles; this is {equivalent to} a standard quantile-quantile plot with the diagonal subtracted. As a baseline, we provide a 95\% confidence interval of this difference based on $10,000$ Monte Carlo simulations of 500 uniformly distributed random variables.

The permutation test (left panel) is valid under Settings A and B, where $Y_t$ and $Y_{t-1}$ are independent after conditioning on $S_t$. However, under Setting C, the p-values tend to have lower values than a uniform distribution, corresponding to higher-than-nominal type I errors at most significance levels. The MC bootstrap test (right panel) controls the type I error at all significance levels for all three dependence settings, indicating that our adjustment to account for the dependence in $Y_t\mid S_t$ achieved the desired result.

\begin{figure}[!tbh]
    \centering
    \includegraphics[width=\linewidth]{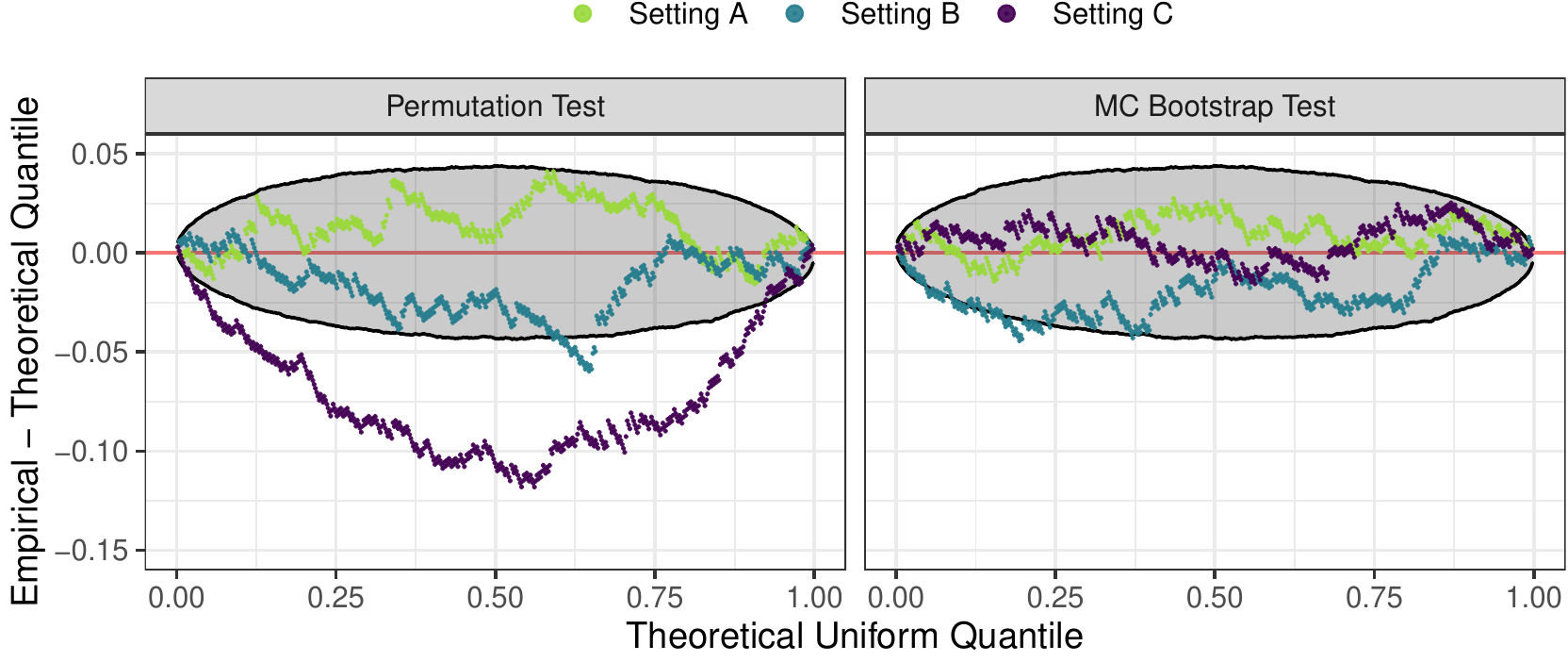}
    
    \caption{\small{{\bf Synthetic sequence data: Validity of permutation test ({\em left}) versus Markov Chain bootstrap test ({\em right}).} Under $H_0: \gamma=0$, a valid test is expected to return uniformly distributed p-values. Each curve corresponds to a different experimental setting (A, B, or C), and shows the difference between empirical and uniform theoretical quantiles for 500 repetitions; see text for details. The gray region represents a 95\% pointwise confidence interval derived from Monte Carlo samples of 500 uniform deviates. Under Setting C (labels dependent even after conditioning on predictors; purple curve), the permutation test (left panel) does not control the type I error, but the Markov Chain bootstrap test (right panel) does. (Setting A: $\phi=\phi^\prime=0$. Setting B: $\phi=0, \ \phi^\prime=0.8$. Setting C: $\phi=\phi^\prime=0.8$. Sample sizes $\lvert\mathcal{T}_1\rvert=\lvert\mathcal{T}_2\rvert=\lvert\mathcal{V}\rvert=250$.) }}
    \label{fig:setting-C-QQ}
\end{figure}

\textbf{Power:} We next examine how the power of the test $H_0:\gamma=0$ versus $H_1:\gamma \neq 0$ depends on
\begin{enumerate}[label=\roman*.]
  \item  signal strength  $\gamma$ (Figure \ref{fig:setting-C-power}, top), 
  \item train sample size $\lvert\mathcal{T}_1\rvert$ (Figure \ref{fig:setting-C-power}, bottom),
  \item autocorrelation $\phi$ in labels $\series{Y_t}$, or equivalently, correlation in $\series{Y_t \rvert S_t}$ (Figure \ref{fig:power-versus-dependence}, left), and 
  \item autocorrelation $\phi^\prime$ in predictors $\series{S_t}$ (Figure \ref{fig:power-versus-dependence}, right).
\end{enumerate}
At each fixed value of $\gamma$, we perform 1000 simulations. Power is then estimated as the fraction of rejected null hypotheses at the $\alpha=0.05$ level. To ensure validity, we choose $\lvert\mathcal{T}_{1}\rvert\ge250$ as before for the MC bootstrap test. (The permutation test is valid by construction under Settings A and B, but not Setting C.)

\begin{figure}[!p]
    \centering
    \includegraphics[width=.8\linewidth]{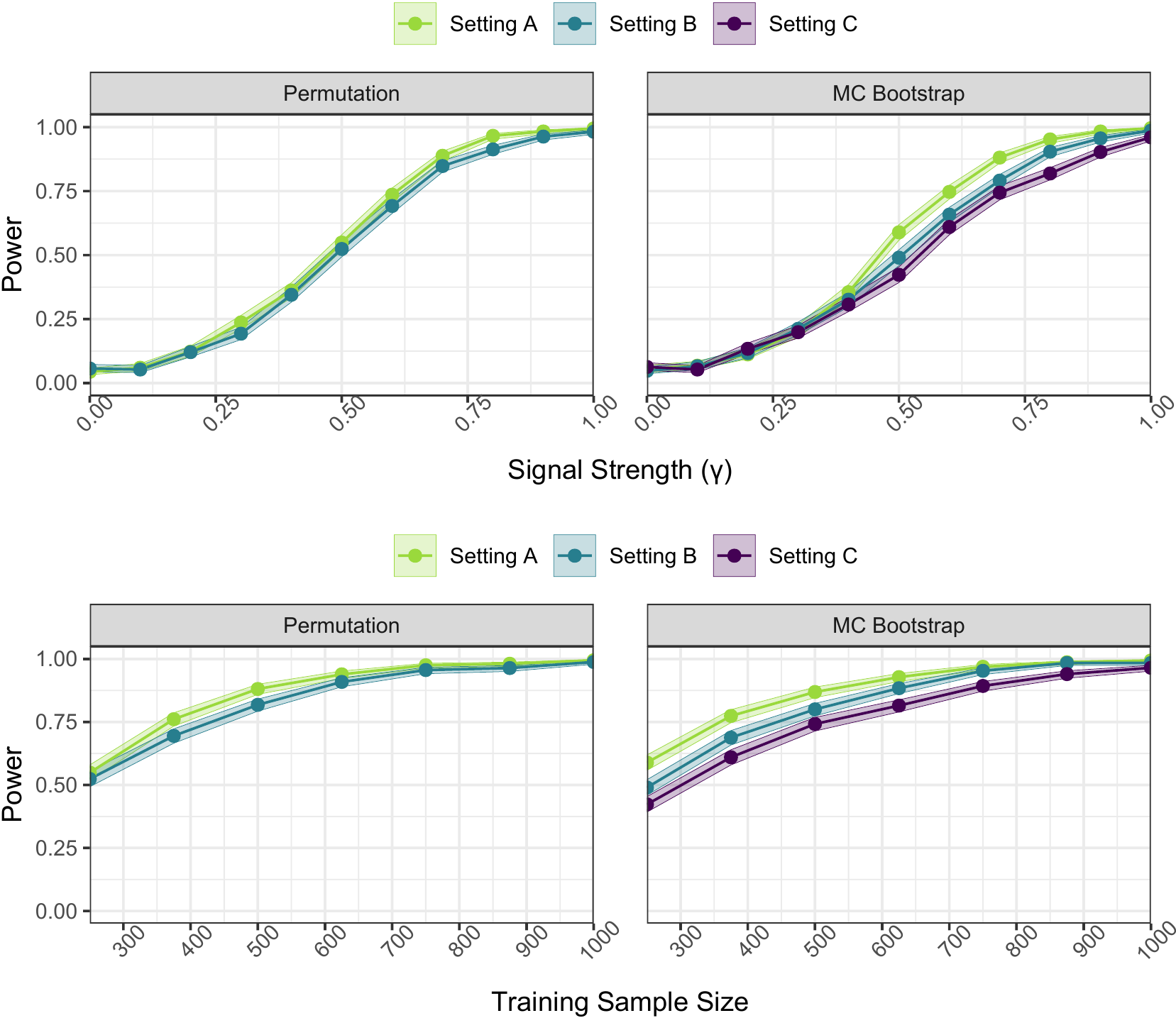}
    \caption{
    \small{\textbf{Synthetic sequence data: Power as a function of signal strength and training sample size.}  
    {\em Top:} The power of all tests increases with the signal strength $\gamma$, regardless of dependence setting. The MC bootstrap test has similar power as the permutation test, but the former test can be applied to the more challenging Setting C with dependent labels $Y_t \vert S_t$. Sample sizes $|\mathcal{T}_1|=|\mathcal{T}_2|=|\mathcal{V}|=250$. {\em Bottom:} The power of all tests, at the alternative $\gamma=0.5$, increases with the train sample size $|\mathcal{T}_1|$, regardless of dependence setting. Sample sizes $|\mathcal{T}_2|=|\mathcal{V}|=250$. The filled regions represent 95\% pointwise confidence intervals for binomial  proportions. (Setting A: $\phi=\phi^\prime=0$. Setting B: $\phi=0, \ \phi^\prime=0.8$. Setting C: $\phi=\phi^\prime=0.8$.) Setting C is not shown for the permutation test, as it is not valid.}}
    \label{fig:setting-C-power}
\end{figure}

\begin{figure}[!p]
    \centering
    \includegraphics[width=.8\linewidth]{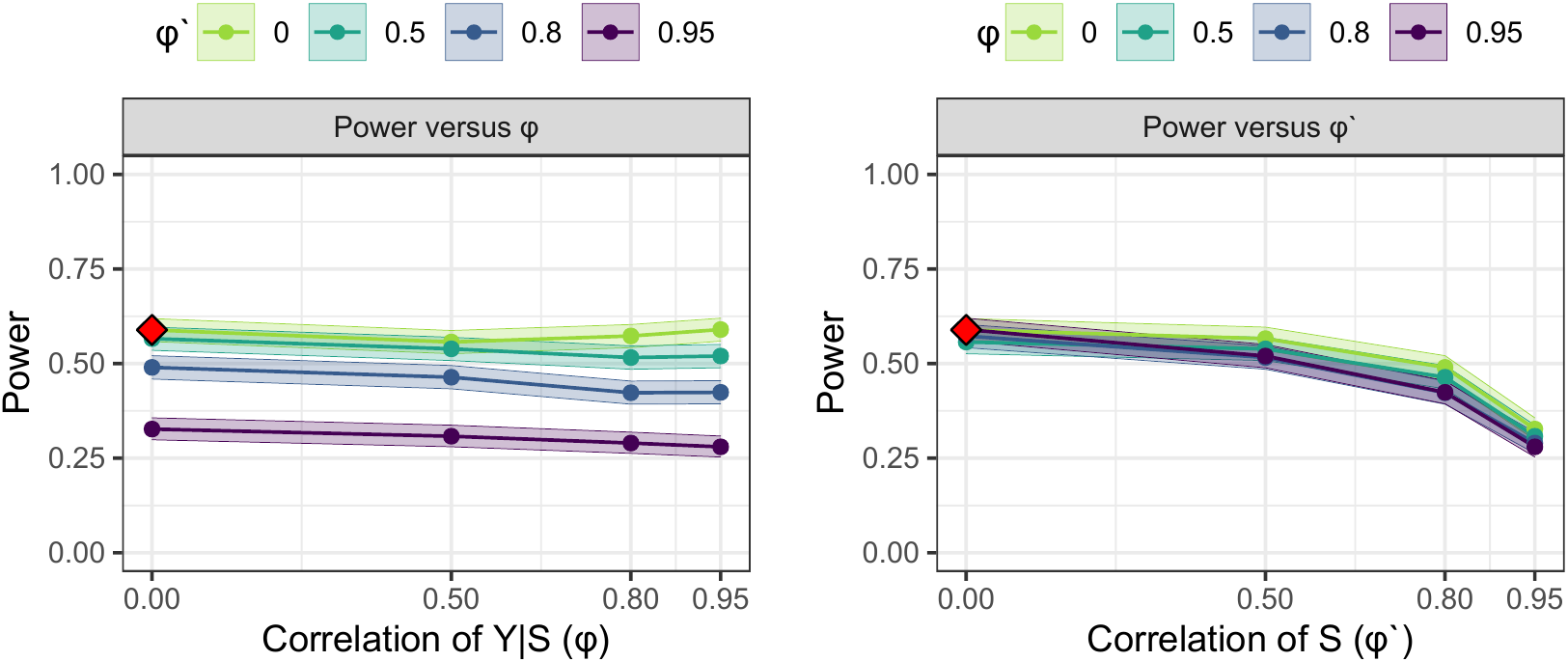}
    
    \caption{\small{{\bf Synthetic sequence data: Power of bootstrap test as function of autocorrelation of $Y$ ({\em left}) and $S$ ({\em right})}. The red diamond-shaped markers correspond to Setting A with \texttt{IID} data. The power to detect the signal $\gamma=0.5$ is independent of the correlation in $Y_t|S_t$ (value of $\phi$), but decreases with correlation in $S_t$ (larger values of $\phi^\prime$). The filled regions represent 95\% pointwise confidence intervals for binomial proportions. (Sample sizes $|\mathcal{T}_1|=|\mathcal{T}_2|=|\mathcal{V}|=250$.)}
    \label{fig:power-versus-dependence}}
\end{figure}

As expected, the power increases with the signal strength $\gamma$ for all tests and dependence settings (Figure~\ref{fig:setting-C-power}, top). When both tests are valid, the MC test has the same power as the permutation test. The practical implication is that, even if one \emph{thinks} Setting B is a good approximation to the problem at hand, there are benefits to applying the MC bootstrap test: one can achieve similar power with the advantage of having robustness in the event that the labels are dependent after conditioning on predictors.

Figure~\ref{fig:setting-C-power} (bottom) indicates that the power of the tests may be determined by the {\em quality of the regression estimator} $\widehat{m}_\text{post}(\s)$: indeed, Figure~\ref{fig:setting-C-power} (bottom) shows that the power at a fixed alternative ($\gamma=0.5$) increases with the train sample size $\lvert\mathcal{T}_1 \lvert$. The latter result is consistent with theorem 3.3 of \cite{kim2019global}, which states for a regression permutation test under Setting A, that if the chosen regression method $\widehat{m}_\text{post}(\s)$ has a small mean integrated squared error, then the power of testing (\ref{eqn:new_null}) is large over a wide region of alternative hypotheses.

\begin{figure}[!t]
    \centering
    \includegraphics[width=\linewidth]{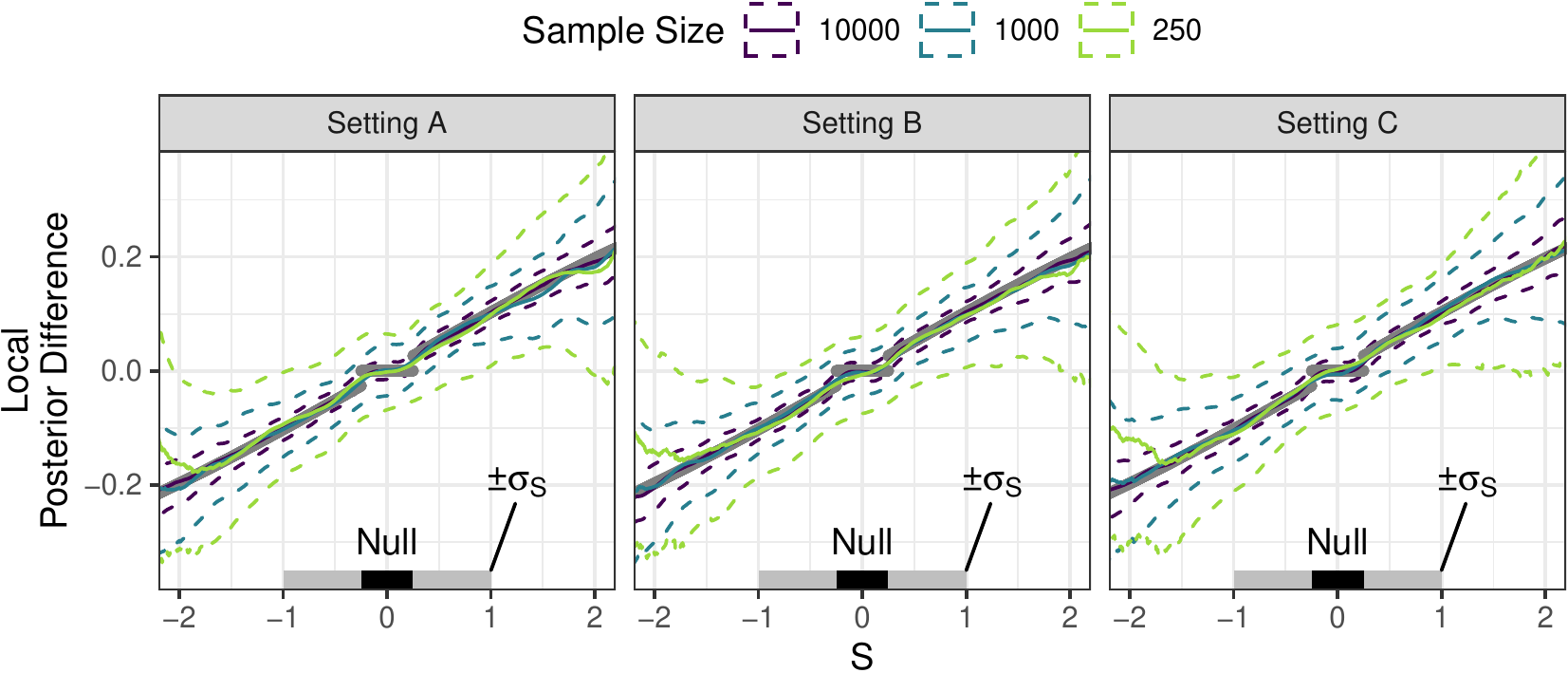}
    
    \caption{\small{{\bf Synthetic sequence data: Local posterior difference in null region.} The gray curves in each panel/setting show the true LPD, which is zero in the local null (no signal) region $s\in (-0.25,0.25)$. The solid and dashed colored curves represent the mean and one standard deviation estimates of the LPD over 200 simulated data sets. These estimates are on average close to the true LPD, with the dispersion decreasing for a local nonparametric estimator (like the NW kernel estimator) as the train sample size increases. The gray bar at the bottom marks the standard deviation of $s$; the large variance of the estimated LPDs far from $s=0$ is partially due to the concentration of data near $s=0$. (Setting A: $\phi=\phi^\prime=0$. Setting B: $\phi=0, \ \phi^\prime=0.8$. Setting C: $\phi=\phi^\prime=0.8$. Sample sizes $\lvert\mathcal{T}_1\rvert$ vary, $\lvert\mathcal{T}_2\rvert=\lvert\mathcal{V}\rvert=250$.)}}
    \label{fig:setting-C-importance}
\end{figure}

Figure \ref{fig:power-versus-dependence} brings insight on how dependence in $\series{(S_t,Y_t)}$ affect the power of the MC bootstrap test. The red diamond-shaped markers represent Setting A with \texttt{IID} sequence data ($\phi=\phi^\prime=0$). Increasing  correlation in the labels $Y_t \vert S_t$ (larger values of $\phi$) has no effect on power, while the test remains valid as long as $\widehat{m}_\text{seq}$ is accurate; this further emphasizes the previous result that the bootstrap test is robust to correlation in $Y_t\vert S_t$ without sacrificing power. Meanwhile, increasing correlation in $S_t$ (larger values of $\phi^\prime$) reduces power; this follows from a reduced effective sample size which in turn reduces the quality of $\widehat{m}_\text{post}$.

\textbf{Local Posterior Differences:} 
For our synthetic example, the hard thresholding operator $H_\delta(\cdot)$ induces a region $s\in (-\delta,+\delta)$ where $p (s | Y=1) = p (s | Y=0)$. If the null hypothesis in Equation \ref{eqn:new_null} is rejected, then the estimated LPDs  can identify the regions of large versus small distributional differences, as long as the regression estimator $\widehat{m}_\text{post}$ is consistent and the train sample size $|T_1|$ is sufficiently large. Figure~\ref{fig:setting-C-importance} shows the average and one standard deviation estimates of the LPD for a NW kernel estimator over 200 simulations.

\section{Relating Evolution of TC Convection to Rapid Intensity Change}\label{sec:results_TC}
In our TC study, each observation consists of (i) a 24-hour sequence $\S_{<t}=\{\X_t,\X_{t-1},\dots,\X_{t-48}\}$ of one-dimensional radial profile functions $\X_t=\frac{1}{2\pi}\int_0^{2\pi}T_{b,t}(r,\theta)d\theta$, sampled every 30 minutes for a total of 48 profiles (see Figure \ref{fig:profiles}), and (ii) a binary label $Y_t \in \{0,1\}$ for the entire sequence\footnote{{We check that the dominant principal components of $\X_t$ and the continuous intensities used to derive $Y_t$ are stationary via augmented Dickey-Fuller tests; the p-value of each test (including tests of the first 3 ORB coefficients in \citealt{mcneely2020unlocking}) are all $<10^{-20}$. We conclude that Assumption \ref{assump:stationary} is reasonable for these data.}}. Since all individual sample points that are part of a rapid intensity change event are labeled as $Y=1$, a 24-hour sequence $\S_{<t}$ with sequence label $Y_t=1$ could either be part of an ongoing RI/RW event (if $Y_t=1$ falls near the end of an event), or be part of the lead-up to RI/RW (if $Y_t=1$ falls near the beginning of an event).  Analyses of the latter case --- such as approaches that can identify archetypal modes of structural evolution preceding the {\em onset} of RI/RW --- are particularly valuable to forecasting of RI/RW events.

We divide our TC study into three parts:
\begin{enumerate}
    \item  Analysis by event type (RI versus not RI, or RW versus not RW) within each basin, North Atlantic (NAL) or eastern North Pacific (ENP), for a total of four different two-sample tests.
    \item Case studies of three tropical cyclones (Hurricanes Nicole, Jose, and Rosa).
    \item Analysis of a subset of our data that consists of sequences immediately preceding RI onset. (We ask whether our regression two-sample test can find archetypical evolutionary modes preceding RI onset, and if so, what the ``lead time'' between typical patterns and the RI onset would be.)
\end{enumerate}
As in the synthetic example, we estimate $m_\text{seq}$ in Algorithm~\ref{alg:distributional-differences} with a Markov chain of order $k=8$. The sequence data $\S_{<t}$ are however much more complex than in our synthetic example. This is where we benefit from a more complex regression method for estimating $m_\text{post}$; here we fit a convolutional neural network to the 24-hour sequence data. Further details on how we estimate $m_\text{seq}$ and $m_\text{post}$ can be found in \cite{mcneely2022Supplement} (Section A).

\subsection{Analysis by Event Type and Basin} 
\begin{figure}[!htbp]
    \centering
    \includegraphics[width=.9\linewidth]{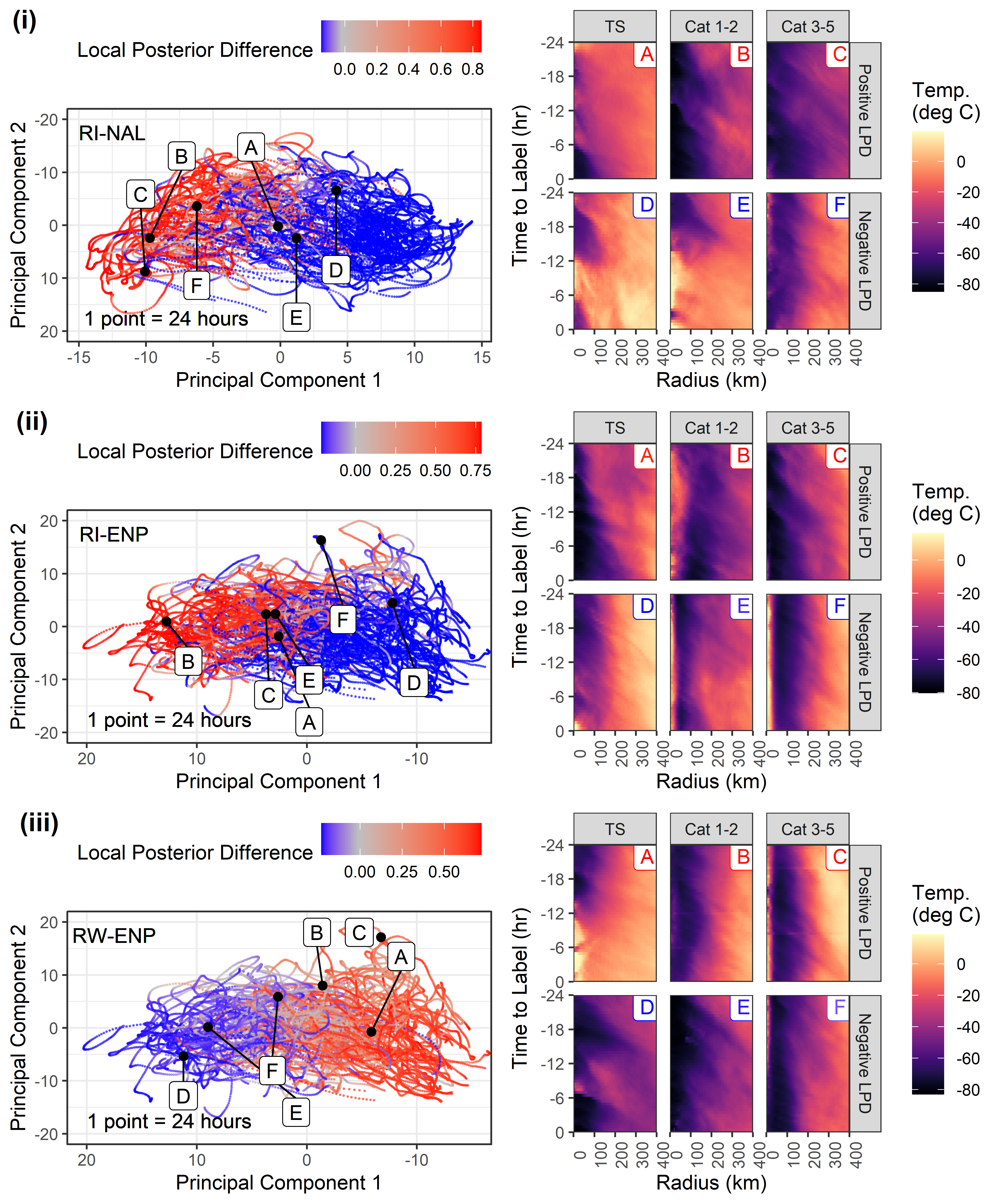}
    
    \caption{\small {\bf Analysis by event type and basin.} The MC bootstrap test rejects $H_0$ (that is, it detects significant differences in convection) for the RI-NAL, RI-ENP, and RW-ENP models; the test is not rejected for the RW-NAL model. Analyses of local posterior difference (LPD) are shown for the first three models (see panels i-iii). {\em Left column}: Two-dimensional PCA map of sequence data. One point in the map represents a 24-hour structural trajectory (sequence of radial profiles) $\S_{<t}$ with the color coding for the estimated LPD. {\em Right  column}: Six 24-hour structural trajectories at locations \boxit{A}-\boxit{F} in the PCA map, shown as Hovm\"oller diagrams (recall Figure~\ref{fig:profiles}). The examples for each study are selected at random from each combination of LPD sign (positive LPD: \boxit{A}-\boxit{C}; negative LPD: \boxit{D}-\boxit{F}) and TC intensity (Tropical Storm: \boxit{A}, \boxit{D}; Category 1-2: \boxit{B}, \boxit{E}; Category 3-5: \boxit{C}, \boxit{F}). See text for a discussion of the results.}
    \label{fig:basinplot}
\end{figure}

{\em Significance test.} We start by testing $H_0:{p(\s_t|Y_t=1)}={p(\s_t|Y_t=0)}$ by event type and basin. For rapid intensification (RI), the MC bootstrap test rejects $H_0$ at level $0.05$ for both the NAL and ENP basins, meaning that we indeed detect a significant difference (p$<$0.01) in 24-hour sequences $\S_{<t}$ of convective structure leading up to RI versus not-RI events. For rapid weakening (RW), the MC bootstrap test rejects $H_0$ in the ENP basin, but not in the NAL basin.

Our results are consistent with scientists' understanding of TCs; for rapid intensification, a TC exhibits a narrow range of convective patterns ``primed'' to efficiently convert heat energy to mechanical energy across the storm, hence the structural difference in convection for RI versus not-RI events. Rapid weakening, on the other hand, is a more complex process driven by several factors external to the TC, such as vertical wind shear, which may not be fully captured by convective structure. In addition, RW is expected to be more difficult to detect in the NAL basin due to the broader range of possible environmental configurations and the increased rarity of over-water RW in the basin.

{\em Local posterior difference.} Next we investigate what kind of structural patterns lead to the rejection of $H_0$ for the RI-NAL, RI-ENP, and RW-ENP models. Figure \ref{fig:basinplot} (left) shows a two-dimensional embedding of the sequence data via principal component analysis (PCA; computed separately for each basin). Each point represents a 24-hour sequence $\S_{<t}$ colored by its local posterior different (LPD). Note that PCA is only used for purposes of visualization; the test itself is performed on the entire sequence of radial profiles without a prior dimension reduction step. Figure \ref{fig:basinplot} (right) shows examples of Hovm\"oller diagrams for six 24-hour sequences $\S_{<t}$ sorted by LPD and TC intensity.

TCs are known to have different distributions of $\S_{<t}$ for different basins. Nevertheless, we identify the same type of evolutionary patterns of convective structure for RI-NAL (panel i) and RI-ENP (panel ii): Positive LPD or ``high chance of RI'' (see diagrams \boxit{A}-\boxit{C} for i and ii) tends to occur for cold cloud tops near the core (dark blue at smaller radius), growing in coverage and depth of convection with time (dark blue region extending to larger radii when going from -24 to 0 hours). Meanwhile, negative LPD or ``low chance of RI'' (see diagrams \boxit{D}-\boxit{F} for i and ii) tends to occur when TCs already possess a well-defined eye (narrow yellow region near the center) or exhibit decaying core convection (dark blue region decreasing in size when going from -24 to 0 hours). {While such patterns can be directly quantified and studied in future works, this work remains focused on exploration of entire radial profiles.}

Finally, RW-ENP results (panel iii) are not exactly opposite of the RI-ENP results (panel ii), meaning that ``high chance of RW'' patterns might not mirror ``low chance of RI'' patterns, and vice versa. In particular, TCs commonly form strong convective cores without eyes (iii-\boxit{E}) \textit{prior} to intensification, form an eye (end of i-\boxit{C}) during RI, then rapidly weaken by dissipating entirely (ii-\boxit{D}, iii-\boxit{A}, iii-\boxit{B}) without reforming a cold, eyeless core because the reduction of intensity is accompanied by a collapse of convection throughout the TC. Thus, RI/RW-ENP results are not symmetric. Unfortunately, the RW-ENP model also predominantly captures the trivial result that currently-intense TCs are more likely to weaken; see Section \ref{sec:confounding} for a discussion of potential corrections.

\subsection{Hurricane Case Studies}
\begin{figure}[!htbp]
    \centering
    \includegraphics[width=.8\linewidth]{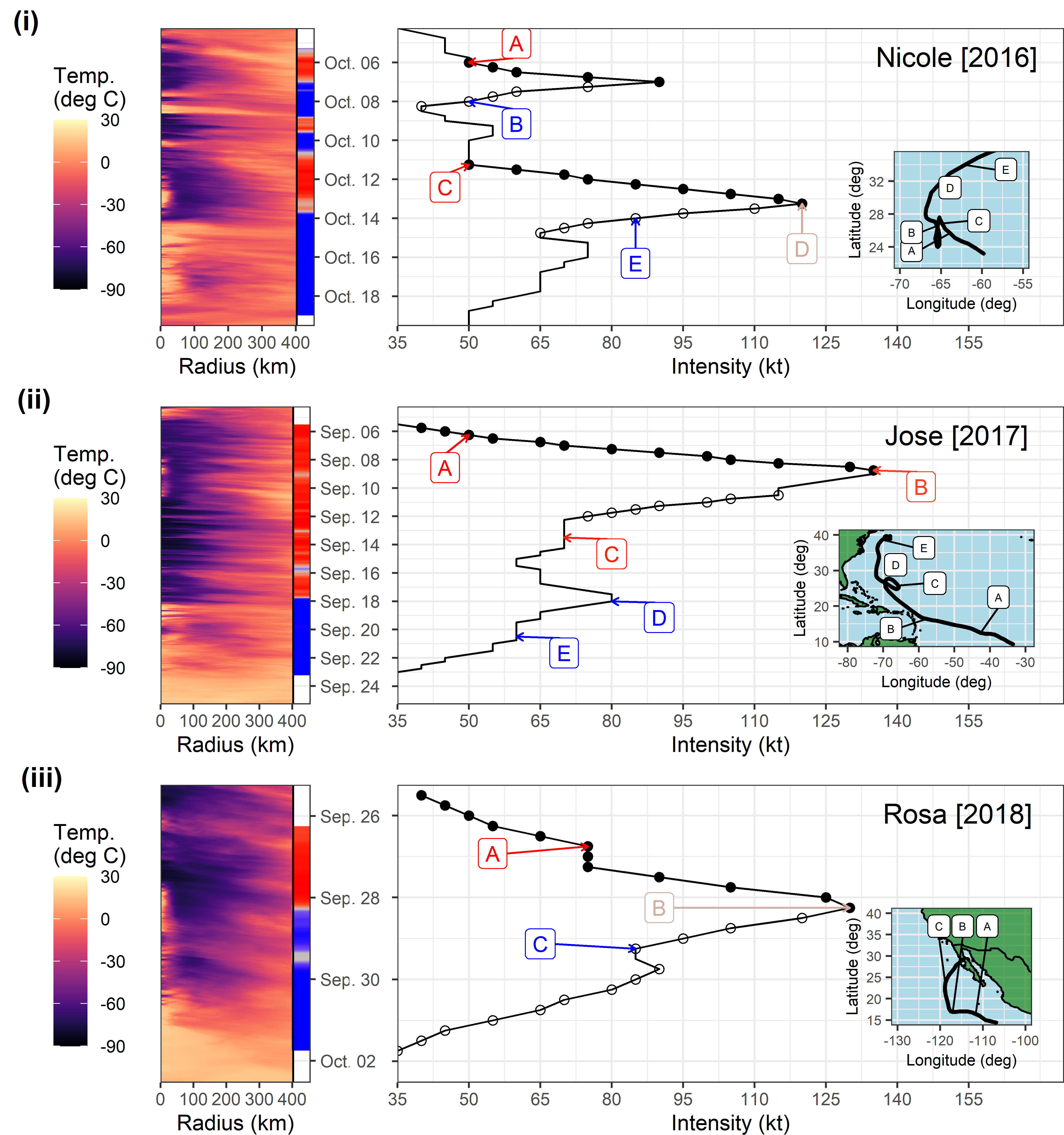}
    
    \caption{\small {\bf
    Case studies of Hurricanes Nicole {[NAL~2016,~i]}, Jose {[NAL~2017,~ii]}, and Rosa {[ENP~2018,~iii]}.} {\em Left}: Structural trajectories of each storm through its entire lifetime, with the sidebar to the right showing the LPD at each 30-minute observation; each LPD value is evaluated using the preceding 24-hour sequence for an RI model (trained on the train sample for that basin), with positive values in red and negative values in blue. {\em Right}: Storm intensity over time, where filled versus empty circles mark RI and RW events, respectively. The physical track of the storm across the North Atlantic is shown inset, with the same time points labeled.\\
    \emph{Panel (i)}: For Hurricane Nicole, the LPDs capture rapidly intensifying periods (\boxit{A},~\boxit{C},~\boxit{D}), collapsing convection \boxit{B}, and the decay of the TC eye \boxit{E}.\\
    \emph{Panel (ii)}: Hurricane Jose was subjected to high vertical wind shear near Sept. 9, which our model does not account for; while the core convection of the TC remained poised for RI, high shear instead caused rapid weakening. \\
    \emph{Panel (iii)}: Hurricane Rosa exhibited two interesting phenomena captured by the LPDs. First, it experienced a pause in its rapid intensification \boxit{A}; the LPDs indicate an ongoing RI threat at this time, consistent with the resumption of intensification 18 hours later. Second, beginning on Sept. 28, the TC underwent an eyewall replacement cycle, associated with weakening. The LPDs mark this shift at the TC's peak intensity, as well as the brief period (following \boxit{C}) where eyewall replacement is completed prior to landfall.}
    \label{fig:case_studies}
\end{figure}

Thus far, we have analyzed the collection of 24-hour sequences in the 2013-2020 test sample as a whole; however, forecasters monitor {\em individual} storms in real time for signals of RI. Here, we take an in-depth look at our results for three individual TCs in the test set: Hurricanes Nicole, Jose, and Rosa. Each of these storms display distinct evolutionary modes. We track each TC through its lifetime to investigate the relationship between the evolution of convective structure, the LPDs, and intensity change.

Figure \ref{fig:case_studies}, left, shows the evolution of radial profiles for each storm. Appended to the right of these profiles is a sidebar showing the LPDs from the RI model for the associated basin. The right panels display the evolution of each storm's intensity, where filled and hollow markers indicate RI and RW events, respectively. The storm's physical track over the ocean is shown as an inset. These three hurricane case studies respectively highlight: (i) signals which lead RI, (ii) the effect of vertical wind shear, and (iii) the appearance of an eyewall replacement cycle (a process by which a second eyewall forms, robbing the TC of energy as it shrinks to replace the original eyewall).

Panel (i) depicts {\em Hurricane Nicole} [2016]. The TC underwent RI \boxit{A} on Oct. 6 before its intensity stalled while influenced by the outflow from Hurricane Matthew [2016], which induced vertical wind shear \boxit{B}. Several days later, Nicole re-intensified \boxit{C}, \boxit{D} before again weakening and then transitioning into an extratropical system \boxit{E}. The LPDs in Figure \ref{fig:case_studies} appear to align with intensity changes and the emergence of deep convection \boxit{A}, \boxit{C} and eye formation \boxit{D}. The structural trajectories immediately preceding \boxit{A} and \boxit{C} are particularly interesting: the TC has not yet begun to intensify rapidly, but 24-h sequences including and prior to \boxit{A} and \boxit{C} have strongly positive LPDs. These results indicate that structural trajectories of radial profiles in the North Atlantic may contain signals of RI prior to onset.

Panel (ii) shows {\em Hurricane Jose} [2017] and highlights the importance of vertical wind shear. This TC exhibited deep convection in the core for nearly two weeks, remaining at elevated RI risk according to the RI-NAL posterior differences. However, after an initial period of RI, high vertical wind shear disrupted the TC structure around Sept. 9. The TC decayed from 135 kt to about 70 kt and never appreciably intensified again despite several periods of elevated LPDs prior to \boxit{D}. The underlying regression $\widehat{m}_\text{post}$ does not account for the vertical wind shear which prevented the TC's intensification. See Section \ref{sec:confounding} for a discussion about how to potentially account for external factors such as vertical wind shear.

Finally, panel (iii) depicts the short-lived eastern North Pacific TC {\em Hurricane Rosa} [2018], which underwent an extended period of RI before beginning an eyewall replacement cycle. This evolution included two interesting phenomena. First, the TC experienced a pause in its rapid intensification \boxit{A}; the LPDs indicate an ongoing RI threat at this time, consistent with the resumption of intensification 12-18 hours later. Second, beginning on Sept. 28, the TC underwent an eyewall replacement cycle, which manifests as a expansion of the eye accompanied by an evening out of convection across the storm. Such cycles typically result in a decrease of the TC's maximum sustained winds. The LPDs mark this shift at the TC's peak intensity, as well as the brief period \boxit{C} where eyewall replacement is completed prior to landfall.

\subsection{Structural Trajectories Preceding Rapid Intensification}
\begin{figure}[!tb]
    \centering
    \includegraphics[width=\linewidth]{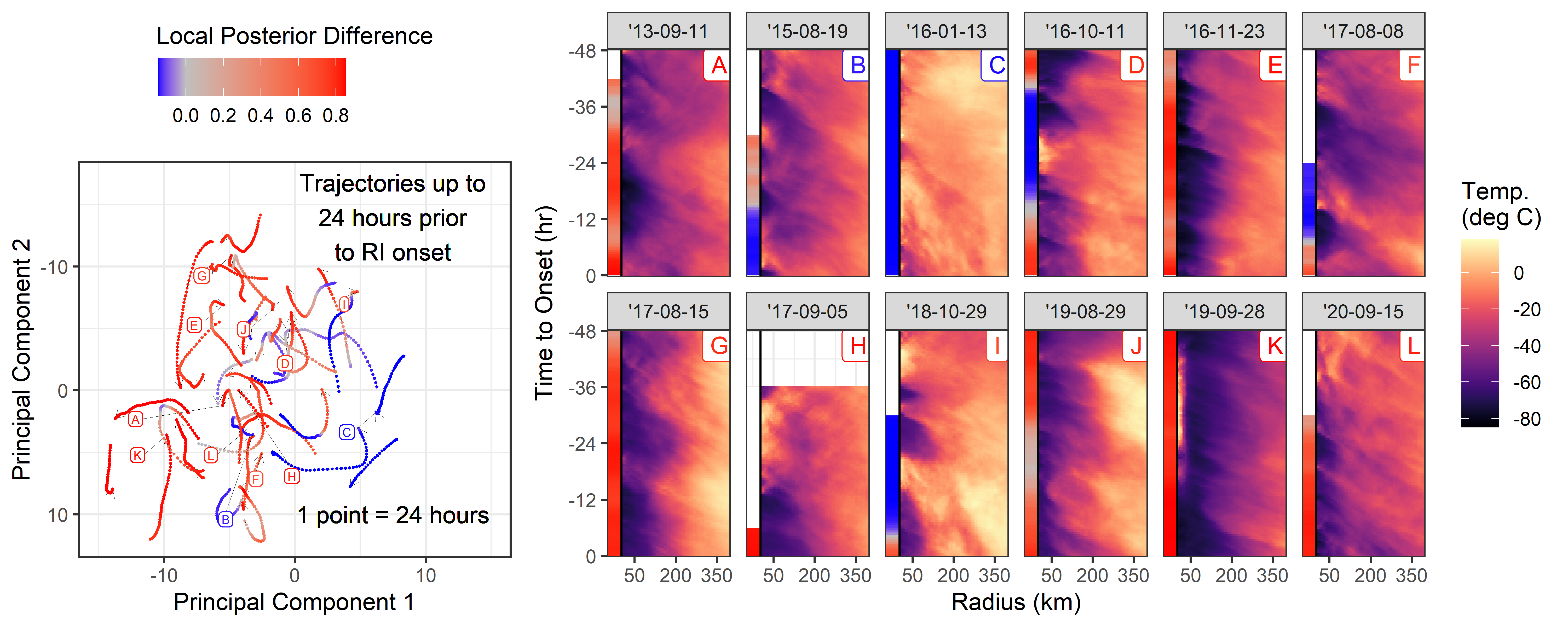}
    
    \caption{\small {\bf Detecting RI prior to onset.} \emph{Left}: PCA map of a subset of the RI-NAL test sample, representing 24-hour sequences prior to RI onset. We show 36 events with at least 24 hours of non-missing radial profiles $X_t$ (that is, at least one sequence $\S_{<t}$). \emph{Right}: Examples of 48-hour structural trajectories preceding RI onset. The local posterior differences, shown in the bar to the left of each map, largely signal TCs with deep core convection in the last 24 hours as ``primed for RI''.}
    \label{fig:ri-study}
\end{figure}

In the previous section, case studies indicated that positive LPDs can  \textit{lead} RI events. That is, they may signal convective structure primed for RI \textit{prior} to RI onset. That core convection can predict RI is known; at least one RII forecast model used by the NHC includes the fraction of GOES pixels between 50 and 200 km with temperatures below $-30^\circ$C as a predictor \citep{kaplan2015rii}. We would thus expect our LPDs to contain some signal leading RI.

In this section, we hone in on a subset of the RI-NAL test sample, which represents complete 24-hour sequences $\S_{<t}$ leading up to RI onset points (such as points i-\boxit{A} and i-\boxit{C} in Figure \ref{fig:case_studies}). This leaves us with a total of 36 sequences, visualized in the PCA map of Figure \ref{fig:ri-study}, left. We then investigate whether we are able to detect RI within the 48 hours prior to onset, and if so, what the lead time might be. As before, a positive posterior difference indicates $p(\s_t|Y_t=1)>p(\s_t|Y_t=0)$. Because $\S_{<t}$ here encodes the recent evolution of convective structure, we interpret LPD $\lambda(\s_t)\gg0$ prior to RI  as an indication that the TC at time $t$ exhibits convective structures primed for RI onset. That is, $\lambda(\s_t)\gg0$ in these cases indicates an {\em elevated risk of RI} based on convective evolution alone. 

In 30 (83\%) out of 36 onsets, the RI-NAL LPDs indicate an above-average RI threat at onset. In 28 (78\%) and 24 (66\%) onsets, LPDs indicate an above-average RI threat 6 or 12 hours prior to onset, respectively. Some example trajectories are shown in Figure \ref{fig:ri-study}, right. Prior to RI onset, TCs tend to exhibit deep core convection, evidence of strong outflow (indicated by the downward slant of $\S_{<t}$), and often pronounced diurnal cycles (oscillations in $\S_{<t}$ over time). These results indicate that structural trajectories or LPDs from our test could serve as valuable inputs to RI forecast models such as the SHIPS-RII \citep{kaplan2015rii}.

\section{{Limitations and Potential Extensions of Methods}}
\label{sec:extensions}
{We end our paper by discussing limitations of the current analysis, and suggest potential extensions of our work  to answer different questions about the behavior of TCs in the same binary (RI versus not-RI) setting.}

\subsection{Adding Multiple Data Sources, {Environmental Variables, and Other Functional Features}}\label{sec:discussion_multivar}
Deep convection as revealed by IR imagery is only one ingredient required for RI onset. {Our current analysis does not include environmental variables, such as vertical wind shear, which are known to affect RI. Our analysis also neglects asymmetric patterns in radial profiles, and key ORB functional features. These limitations however only pertain to the presented TC analysis, rather than being limitations of the proposed methodology per se. Indeed,
our testing framework for distributional differences for a binary response variable extend to settings with multiple data sources and different data types:}

For a joint analysis of several variables, one directly adds predictors to the regression model $\widehat{m}_\text{post}$ so that the sequence $\S_{<t}$ consists of, for example, both structural features (e.g., ORB functions) and environmental features (e.g., vertical wind shear or sea surface temperatures). Furthermore, the CNN model can easily handle multiple inputs; including {other observation bands (e.g., the water vapor-sensitive $6.9\mu$m GOES band or microwave imagery from polar-orbiting satellites),} and ORB functions  other than radial profiles {(e.g., the size of temperature level sets as a function of the temperature threshold; see \citealt{mcneelyquantifying,mcneely2020unlocking}). These inputs are then included as channels, which are combined at the dense layer} (with separate feature extraction layers), or added together at the logistic layer (a neural additive model with separate feature and dense layers for each input).

{In particular, the current TC analysis of GOES IR-imagery can be extended to capture asymmetry in convective structure by separating the radial profiles into four quadrants. Our ongoing work on structural forecasting illustrates the promise of such an approach \citep{mcneely2022structural}.}

\subsection{Adjusting for Other Variables}\label{sec:confounding}

{Our presented framework detects potential associations between high-dimensional time series and binary labels; any detected association might of course be due to variables that confound the relationship between the label and the covariates of interest. Our general methodology can however be extended to account for confounding variables with the test of independence then generalizing to a test of conditional independence, and the corresponding regression problem including additional covariates:}

Suppose that we want to detect distributional differences in sequence data  $\{\S_{<t}\}_{t\ge0}$  preceding an event $Y_t=1$ versus a non-event $Y_t=0$, after adjusting for the effect of other variables with sequence data $\{\Z_{<t}\}_{t\ge0}$. For example, wind shear might confound the relationship between convective structures and RI or RW.
Assuming a stationary process $\{(\S_{<t}, \Z_{<t}, Y_t)\}_{t\ge0}$, hence omitting $t$, we test conditional independence
\begin{align}
   H_0: p (\s | Y=1, \Z=\z) &= p (\s | Y=0, \Z=\z),\ \text{for all} \ \s \in \mathcal{S} \text{ and all } \z,  
   \ \text{versus}   \label{eqn:original_null_CI} \\\ H_1: p (\s | Y=1, \Z=\z) &\neq p (\s | Y=0, \Z=\z), \ \text{for some} \ \s \in \mathcal{S} \text{ or } \z.\notag
\end{align}
These hypotheses are equivalent to
\begin{align} 
H_0: \mathbb{P} (Y=1 | \S=\s,\Z=\z) = \mathbb{P} (Y=1|\Z=\z),
 & \text{ for all} \ \s \in \mathcal{S} \text{ and all } \z,   \label{eqn:global_hypothesis_CI}\\
H_1: \mathbb{P} (Y=1 | \S=\s,\Z=\z) \neq \mathbb{P} (Y=1|\Z=\z),
&\text{ for some } \mathbf{s} \in \mathcal{S} \text{ or } \z.\notag
\end{align}
Analogous to Equation~\ref{eqn:test_stat}, we can define a regression test statistic $\lambda$ based on the difference between an estimate of $\mathbb{P} (Y=1 | \S=\s,\Z=\z)$ and $\mathbb{P} (Y=1|\Z=\z)$. Note that if $\Z$ is associated with both $\S$ and $Y$, then a permutation test is not valid even for Setting A with \IID\ data, because of lack of exchangeability under $H_0$. One solution for (\IID\ as well as \DID) sequence data $\{(\S_{<t}, \Z_{<t}, Y_t)\}_{t\ge0}$ is to extend our bootstrap test to a procedure where one estimates the distribution of the label series $\{Y_t\}_{t\ge0}$ {\em conditional on} $\Z$.

Regarding TC rapid intensity change, the admission of confounders would improve the interpretability of both the RI and RW tests. In the case of RI, wind shear $\Z$ is a powerful environmental predictor and can inhibit intensification $Y$ of a TC with otherwise favorable structure $\S$ (e.g., Hurricane Jose [2017], Figure \ref{fig:case_studies}, ii-C). Meanwhile, the results for RW (Figure \ref{fig:basinplot}(iii)) appear to weakly capture the obvious relationship: stronger storms are more likely to rapidly weaken. By accounting for the effect of current intensity ($\Z$), we could better assess the relationship between structural evolution ($\S$) and intensity change ($Y$).

\subsection{Local P-Values}
{In this work, we  refer to the local posterior difference (Equation~\ref{eqn:local-importance}) as a local diagnostic, rather than as a local p-value, because empirical results show that we do not control the type I error of a point-wise test  $H_0(\s): \mathbb{P} (Y=1 | \mathbf{S}=\mathbf{s}) = \mathbb{P} (Y=1)$ at current sample sizes. The LPD value can, however, in principle be used to test the local null \begin{align}
\label{eq:local_null_ball}
    H_0^{\epsilon}(\s): \mathbb{P} (Y=1 | \mathbf{S}=\mathbf{s}') = \mathbb{P} (Y=1)
\text{ for all } \mathbf{s}' \in B(\s;\epsilon),
\end{align}
where $B$ is a ball of radius $\epsilon$ centered at $\s$.  {In \citet{mcneely2022Supplement}, we} show that, under DAG B, the local p-values (\ref{eq:local_null_ball}) are valid if the regression estimator
$\widehat m_{\text{post}}$ only uses the observations in $\mathcal{D}$ such that $\S \in B(\s;\epsilon)$. The latter assumption holds for regression estimators that are based on partitions, such as tree-based estimators (random forests, boosting methods), as well as smoothing kernel estimators with finite support.}

\subsection{Bootstrapping the Label Series}\label{sec:boot}
To estimate the null distribution of the test statistic $\lambda$ (\ref{eqn:global_hypothesis}), we currently assume a $k$-step Markov chain and draw new labels from the Markov autoregressive model (\ref{eqn:markov-marginal}). There are other ways one can bootstrap the label distribution, including adopting sampling schemes that model long-range dependence in the label sequences. For a review of bootstrap methods for dependent data, see for example \cite{buhlmann2002bootstraps}, \cite{horowitz2003bootstrap}, and \cite{kreiss2011bootstrap}.

\section{Conclusions}\label{sec:discussion}

We describe a statistical framework for analyzing the relationship between complex high-dimensional data $\series{\X_t}$ and labels $\series{Y_t}$. For \DID\ sequence data $\series{(\S_t, Y_t)}$, where $\S_{<t}=\{\X_{t-T},\X_{t-T+1}, \ldots,\X_{t}\}$ and $T>0$, we propose a two-sample test (Equation~\ref{eqn:new_null} and Algorithm \ref{alg:distributional-differences}) with minimal assumptions beyond stationarity. The test relies on two simple key ideas: (i) a test statistic based on the posterior difference $\P(Y=1|\S)-\P(\S)$, which we estimate using a machine learning algorithm suitable for the data at hand (empirical results indicate that the test power depends on the quality of the regression estimate; Section \ref{sec:synthetic} and Figure \ref{fig:setting-C-power}); and (ii) a bootstrap test, where we estimate the marginal distribution of $\{Y_t\}_{t\ge0}$ (consistency guarantees asymptotic validity; Theorem \ref{thm:convergence_test_stat}).  Our framework provides interpretable diagnostics in local posterior differences (Section \ref{sec:results_TC}) and can be extended to include longer-range dependence structures (Section \ref{sec:boot}), multiple data sources (Section \ref{sec:discussion_multivar}), and potential confounding variables (Section \ref{sec:confounding}).

\subsection{TC Results}\label{sec:TC_discussion}
We detect a distributional difference between sequences leading up to RI versus not-RI events in both the North Atlantic and eastern North Pacific basins ($p<0.01$). Local posterior differences for RI-NAL and RI-ENP indicate that specific types of convection --- deep and deepening core convection --- are present both before and during RI (Figures~\ref{fig:basinplot}, \ref{fig:case_studies}). Furthermore, we observe that particular convective structures are necessary for RI (Figure~\ref{fig:ri-study}) and thus useful indicators of future RI, but they are not sufficient to trigger RI on their own (Figure~\ref{fig:case_studies}, ii), as the TC environment (e.g., vertical wind shear and ocean heat content) must also support intensification. Thus, while our current results have apparent value for RI forecasting, an analysis of structural trajectories alongside environmental factors such as vertical wind shear promises better understanding of RI and may improve analysis of RW events as well.

When posing the same question regarding RW versus not-RW events, we do not detect a difference in the NAL basin ($p=0.18$); this is expected, as RW is more likely to be driven by a variety of internal and external factors not well captured by convective distribution whereas RI generally \emph{requires} TC convective structure to be capable of sustaining rapid energy uptake. However, the ENP basin is characterized by a narrow spatial region of conditions favorable to TC development such as warm ocean waters and moist air; this homogeneity leads to a significant signal for RW in the ENP basin ($p<0.01$).

\subsection{Broader Impact}\label{sec:broader}
While we apply our methods to meteorology, the proposed statistical framework is applicable to any labeled \DID\ sequence data. Labeled video or other sequence data are common in automation, medical monitoring, and multiple domains in the physical sciences; in many of these areas, the ability to identify high-risk patterns in spatio-temporal data could prove transformative. The flexibility of our framework in admitting a fusion of multiple data sources and adjustment for other variables is also vital to analyzing the complex systems in the environmental and physical sciences and many other domains.\\

%

\section*{Supplemental text} The online supplement contains a description of our CNN regressor in Section \ref{sec:results_TC}, our algorithm for labeling RI and RW events, and a proof of Theorem \ref{thm:convergence_test_stat}.

\section*{Data and Code Availability} All data used are publicly available: HURDAT2 at \url{https://www.nhc.noaa.gov/data/#hurdat} and MERGIR at \url{https://disc.gsfc.nasa.gov/datasets/GPM_MERGIR_1/summary}, which can both be accessed via provided code. All code used to produce the results in this paper is available at \url{https://github.com/ihmcneely/ORB2sample} and as supplementary material.


\bibliographystyle{imsart-nameyear} 
\bibliography{bibliography}       

\clearpage

\appendix
\section{Estimating ${m}_\text{seq}$ and ${m}_\text{post}$}\label{app:mseq_and_mpost}
\textbf{Estimating $m_\text{seq}$:} RI/RW events extending beyond 48 consecutive hours are extremely rare ($\approx5\%$ of events). As such, we estimate $m_\text{seq}$ with a Markov chain of length $k=8$ \emph{at the 6-hour synoptic resolution}. When drawing labels $\series{\Ytilde}$ for a storm, we draw from the Markov chain at synoptic times (0000 UTC, 0600 UTC, 1200 UTC, and 1800 UTC), then fill to the 30-minute resolution as described in Section 3 ($\Ytilde_t=1$ between \emph{consecutive} synoptic $\Ytilde_t=1$). This approximates the distribution of RI event lengths well up 48 hours with a slightly heavy tail beyond 48 hours. We note that each TC requires a separate sequence of draws from $m_\text{seq}$; we thus initialize and burn-in the MC separately for each TC (see Algorithm 1 in main paper for details).

\textbf{Estimating $m_\text{post}$:} Each radial profile is sampled at a 5-km resolution from 0-5km out to 395-400km (for 80 samples in the radius dimension). Our model architecture is identical for all four combinations of basin (NAL/ENP) and event type (RI/RW). Our CNN consists of 3 Convolution-Maximum Pooling layers (8 channels, 8 channels, and 4 channels respectively; convolution size $3\times3$, maximum pooling size $2\times2$) followed by a fully connected layer and a sigmoid output layer. See Table \ref{table:CNN} for the full architecture.

When we initially estimate $m_\text{post}$, we train all parameters. For each subsequent fitting of $\widehat{m}_\text{post}^{(b)}$, we freeze the feature layers (i.e. layers 1, 4, and 7), only re-initializing the fully connected and output layers (i.e. the ``regression'' layers, 11 and 13) for retraining. For each of the four basin-event type combinations, we train 101 CNN models ($B=100$). $\widehat{m}_\text{post}$ achieves approximately 75\%-80\% balanced accuracy on held-out test data for RI in each basin, and approximately 60\%-65\% for RW, while $\widehat{m}_\text{post}^{(b)}$ (trained on random labels) only has $\approx50\%$ balanced accuracy on held-out test data in all cases.

\begin{table}[h]
	\ra{1.2}
	\centering
	\begin{tabular}{@{}rlr@{}}\toprule
		Layer (Type-ID)      &         Output Shape     &    Parameters\\
		\hline
		Conv2d-1:        &    [N, 8, 50, 82]         &      80\\
		ReLU-2:        &    [N, 8, 50, 82]         &      0\\
		MaxPool2d-3:        &    [N, 8, 25, 41]         &      0\\
		Conv2d-4:        &    [N, 8, 27, 43]         &    584\\
		ReLU-5:        &    [N, 8, 27, 43]         &      0\\
		MaxPool2d-6:        &    [N, 8, 13, 21]         &      0\\
		Conv2d-7:        &    [N, 4, 15, 23]         &    292\\
		ReLU-8:        &    [N, 4, 15, 23]         &      0\\
		MaxPool2d-9:        &    [N, 4,  7, 11]        &      0\\
		Dropout-10:        &          [N, 308]         &      0\\
		Linear-11:        &          [N, 256]         & 79,104\\
		ReLU-12:        &          [N, 256]         &      0\\
		Linear-13:        &            [N, 2]         &    514\\
		Sigmoid-14:        &            [N, 2]         &      0\\
		\hline
		Total params: & & \textbf{80,574}\\
		\bottomrule
	\end{tabular}
	\caption{Parameters of the CNN used to estimate $m_\text{post}$. The model consists of three convolutional blocks (layers 1-9). Dropout (layer 10) with probability 50\% is used to regularize the model.}
	\label{table:CNN}
\end{table}

\section{Rapid Intensity Change}\label{app:labeling}
Operationally, RI/RW events are defined by a threshold \emph{and} a time. For example, the rapid intensification index (RII; \cite{kaplan2015rii}) provides the probability of intensification $c$ in a given time period; this includes 24-hour periods such as 25 kt in 24 hr as well as shorter (or longer) periods such as 20 kt in 12 hr (or 55 kt in 48 hr). In lieu of selecting a discretization of time in this manner, we present an algorithm for identifying RI/RW events of any length with a daily rate of intensification of at least $c$ per 24 hours (Algorithm \ref{alg:rapid}). By admitting events both shorter and longer than 24 hours, we prevent the overlap of RI and RW events. See Figure \ref{fig:RI-ident} for an example. The operational (25 kt in 24 hr) definition for RI and RW results in significant overlap between RI and RW events; for example, the transition from RI to RW  on July 15 for Hurricane Carlos [2009] sees an 18-hour overlap between operationally defined RI and RW events. Our labeling procedure restricts the set of $Y=1$ times for each event by trimming non-intensifying (or non-weakening, in the case of RW) times from the beginning and end of each event, while also admitting sufficiently rapid intensification (or weakening) in shorter time periods than 24 hours.

In words, Algorithm \ref{alg:rapid} labels events as such:
\begin{enumerate}
	\item Identify 24-hour periods containing intensification of at least $c$.
	\item Trim the beginning and end of each identified 24-hour RI period such that they begin and end with 6-hour intensification.
	\item Label $Y_t=1$ if any trimmed RI period contains $t$.
\end{enumerate}
RW is identified symmetrically by reversing the input sequence (intensities) in time prior to labelling. By virtue of step 2, RI and RW events can only share endpoints, but cannot overlap.

\begin{figure}[!h]
	\centering
	\includegraphics[width=.7\textwidth]{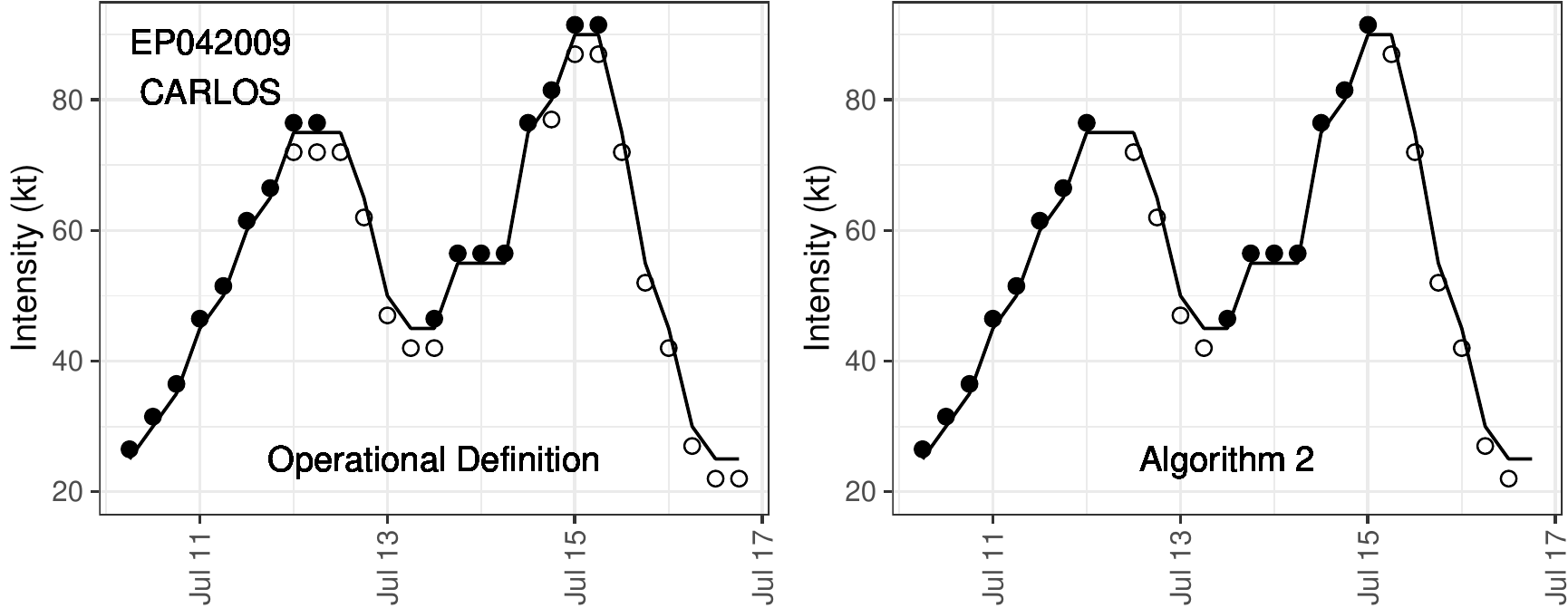}
	\caption{Comparison of RI/RW labeling via 24-hour operational windows (left) versus via Algorithm \ref{alg:rapid} (right). Algorithm \ref{alg:rapid} restricts event labels to only intensifying/weakening times as appropriate, resulting in more logical, non-overlapping labels for the endpoints (e.g., July 15 in the above figure).}
	\label{fig:RI-ident}
\end{figure}

\begin{algorithm}[!h]
	\SetAlgoLined
	\SetKw{req}{Require:}
	\SetKw{output}{Output:} 
	\req Sequence of contiguous intensity (i.e. maximum wind speed) observations $\{W_i\}_{t=1}^T$ for a storm and an intensity change threshold $c$. \\
	\textbf{Initialize:} $\mathbf{Z}=\mathbf{0}_{T,T}$, $Y=\mathbf{0}_T$~\\
	\For{$t$ in $1:T-1$}{
		$\Delta_{t,1}=W_{t+1}-W_t$; \tcp{\texttt{lead-1 intensity change}}~\\ 
		\If{$t\le T-4$}{
			$\Delta_{t,4}=\max(W_t,\dots,W_{t+4})-W_t$; \tcp{\texttt{lead-4 intensity change}}
		}
	}
	$\mathcal{A}=\{t:\Delta_{t,4}\ge c\}$; \tcp{\texttt{24-hour windows containing RI}}
	
	$\mathcal{B}=\{t:\Delta_{t,1}>0\}$; \tcp{\texttt{6-hour windows containing intensification}}
	
	\For{$t$ in $1:T-4$}{
		\If(\tcp{\texttt{label 24-hour RI window}}){$t\in\mathcal{A}$}{$Z_{t,t},\dots,Z_{t,t+4}=1$;\\
			$h=4$;\\
			\While{$h\ge1$ and $t+h-1\in\mathcal{B}$}{
				$Z_{t,t+h}=0$;\tcp{\texttt{trim non-intensification from end of event}}~\\
				$h=h-1$;
			}
			$h=1$;\\
			\While{$h<4$ and $t+h\in\mathcal{B}$}{
				$Z_{t,t+h}=0$;\tcp{\texttt{trim non-intensification from start of event}}~\\
				$h=h+1$;
			}
		}
	}
	\For{$t$ in $1:T$}{
		$Y_t=\max(Z_{t,1},\dots,Z_{t,T})$; \tcp{\texttt{points only need to be valid for one start}}
	}
	\output $\{Y_i\}_{t=1}^T$\\
	\textbf{Note:} {\em The above algorithm identifies rapid intensification. To identify rapid weakening instead, reverse the input sequence $\{W_i\}_{t=1}^T$ at initialization, then reverse the output sequence $\{Y_i\}_{t=1}^T$ at output.}
	\caption{Identifying observations that fall within rapid intensification events.}
	\label{alg:rapid}
\end{algorithm}

\clearpage

\section{Proofs}\label{app:proofs}

\begin{proof}[Proof of Theorem 1]
	Let $F_{G_{\widehat \p_{t_2}}}$ be the cumulative distribution function of $\{Y^0_t\}_{t \in \mathcal{T}_{\text{1}}}$. Then,
	under the null hypothesis,
	\begin{align*}
	\P\left( \S_{<t} \leq \s_{<t}, Y^0_t  \leq y_t, \  \forall t \in  \mathcal{T}_{1} \right)&=
	\P\left( \S_{<t} \leq \s_{<t},  \  \forall t \in  \mathcal{T}_{1} \right)\P\left( Y^0_t \leq y_t,  \  \forall t \in  \mathcal{T}_{1} \right) \notag  && \text{By construction} \\
	&=   \P\left( \S_{<t} \leq \s_{<t},  \  \forall t \in  \mathcal{T}_{1} \right)F_{G_{\widehat \p_{t_2}}}\left( y_t,  \  \forall t \in  \mathcal{T}_{1} \right) && \text{By definition}\\ 
	&\xrightarrow[t_2 \longrightarrow\infty]{\enskip  \enskip}
	\P\left( \S_{<t} \leq \s_{<t},  \  \forall t \in  \mathcal{T}_{1} \right)F_{G^*}\left( y_t,  \  \forall t \in  \mathcal{T}_{1} \right)&& \text{Assumption 4}\\ 
	&=\P\left( \S_{<t} \leq \s_{<t},  \  \forall t \in  \mathcal{T}_{1} \right)\P\left( Y_t \leq y_t,  \  \forall t \in  \mathcal{T}_{1} \right) && \text{By definition}\\
	&=\P\left( \S_{<t} \leq \s_{<t},Y_t \leq y_t,   \  \forall t \in  \mathcal{T}_{1} \right). && \text{Assumption 2}
	\end{align*}
	It follows that
	$$ \mathcal{D}^{t_2}_0 \xrightarrow[t_2 \longrightarrow\infty]{\enskip \mbox{Dist} \enskip}  \mathcal{D}.$$
	The conclusion of the theorem follows from the continuous mapping theorem and Assumption 3.
\end{proof}

\begin{proof}[Proof of Corollary 1]
	By the law of large numbers, for every realization $D$ of $\mathcal{D}$,
	$$\widehat p^{\ t_2}_B(D)  \xrightarrow[B \longrightarrow\infty]{\enskip \mbox{a.s.} \enskip}  \ \P\left(\lambda(\mathcal{D}^{t_2}_0)> \lambda({D})\right).$$
	Now, Theorem 1 implies that
	\begin{align*}
	\lim_{t_2 \longrightarrow \infty} \P\left(\lambda(\mathcal{D}^{t_2}_0)> \lambda({D})\right)&=1-\lim_{t_2 \longrightarrow \infty} \P\left(\lambda(\mathcal{D}^{t_2}_0)\leq  \lambda({D})\right)\\
	&=1-\lim_{t_2 \longrightarrow \infty}F_{\lambda(\mathcal{D}^{t_2}_0)}\left( \lambda({D})\right)=1-F_{\lambda(\mathcal{D})}\left( \lambda({D})\right).
	\end{align*}
	Conclude that 
	$$\lim_{t_2 \longrightarrow \infty}\lim_{B \longrightarrow \infty} \widehat p^{\ t_2}_B(\mathcal{D})=1-F_{\lambda(\mathcal{D})}\left( \lambda(\mathcal{D})\right)$$
	almost surely, and thus this convergence also holds in distribution.
	The conclusion follows from the fact that $F_{\lambda(\mathcal{D})}\left( \lambda(\mathcal{D})\right)$ is a uniform random variable, and therefore so is $1-F_{\lambda(\mathcal\mathcal{D})}\left( \lambda({D})\right)$.
\end{proof}

\section{Local p-values}\label{app:local_p_vals}

In this section we show that LPDs can be used  to
test the local null hypothesis
\begin{align}
\label{eq:local_null_ball}
H_0^{\epsilon}(\s): \mathbb{P} (Y=1 | \mathbf{S}=\mathbf{s}') = \mathbb{P} (Y=1)
\text{ for all } \mathbf{s}' \in B(\s;\epsilon),
\end{align}
where $B$ is a ball of radius $\epsilon$ centered at $\s$.

This local test needs further assumptions to asymptotically control 
type I error probabilities, which we describe in what follows.
\begin{Assumption}[\textbf{Local regression estimator}]
	\label{assump:local_reg}
	$\widehat m_{\text{post}}$ only uses the observations in $\mathcal{D}$ such that $\S_{<t} \in B(\s;\epsilon)$.
\end{Assumption}

Assumption \ref{assump:local_reg} holds for regression estimators that are based on partitions, such as tree-based estimators (e.g., random forests, boosting methods) or smoothing kernel estimators with kernels with bounded support.

\begin{Assumption}[\textbf{Conditional independence}]
	\label{assump:DAG_B}
	$\{(\S_{<t},Y_t)\}_{t \geq 0}$ satisfies the DAG of Setting B (Figure 2).
\end{Assumption}

\begin{thm}
	\label{thm:local_pvalues}
	Assume that the measure $Q$ in the test statistic from Equation 8 places all of its mass on $B(\s;\epsilon)$. Also assume that Assumptions 1, 2,  \ref{assump:local_reg} and \ref{assump:DAG_B} hold. Then, under the local null hypothesis in Equation \ref{eq:local_null_ball}, if the marginal distribution is estimated using the estimator of Example 1, then, for any $0<\alpha<1$
	$$\lim_{t_2 \longrightarrow \infty}\lim_{B \longrightarrow \infty} \P\left( \widehat p^{\ t_2}_B(\mathcal{D})\leq \alpha \right) =\alpha.$$
\end{thm}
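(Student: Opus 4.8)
The plan is to run the two-stage argument behind Theorem~\ref{thm:convergence_test_stat} and Corollary~\ref{cor:typeI}, but restricted to the subsample of $\mathcal{D}$ whose covariates land in $B(\s;\epsilon)$. Stage one is a local version of Theorem~\ref{thm:convergence_test_stat}: show that, under $H_0^\epsilon(\s)$ and with $Q$ supported on $B(\s;\epsilon)$, $\lambda(\mathcal{D}^{t_2}_0)$ converges in distribution to $\lambda(\mathcal{D})$ as $t_2\to\infty$. Stage two is then \emph{verbatim} the argument of Corollary~\ref{cor:typeI}: conditionally on $\mathcal{D}$, the law of large numbers gives $\widehat p^{\ t_2}_B(\mathcal{D})\to\P(\lambda(\mathcal{D}^{t_2}_0)>\lambda(\mathcal{D})\mid\mathcal{D})$ a.s.\ as $B\to\infty$, and letting $t_2\to\infty$ (using stage one) this becomes $1-F_{\lambda(\mathcal{D})}(\lambda(\mathcal{D}))$, which is uniform on $(0,1)$ provided $F_{\lambda(\mathcal{D})}$ is continuous; hence $\lim_{t_2}\lim_{B}\P(\widehat p^{\ t_2}_B(\mathcal{D})\le\alpha)=\alpha$. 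So essentially all the new work is in stage one, and only Assumptions~\ref{assump:local_reg}, \ref{assump:DAG_B} and the choice of the Example~\ref{ex:bernoulli} estimator enter there.

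For stage one, the role of Assumption~\ref{assump:local_reg} together with ``$Q$ charges only $B(\s;\epsilon)$'' is that $\lambda(\mathcal{D})=\int_{B(\s;\epsilon)}(\widehat m_{\text{post}}(\s')-\widehat m_{\text{prior}})^2\,dQ(\s')$ depends on the data only through the local block $\mathcal{D}_B:=\{(\S_{<t},Y_t):t\in\mathcal{T}_1,\ \S_{<t}\in B(\s;\epsilon)\}$ (via $\widehat m_{\text{post}}$) together with the scalar $\widehat m_{\text{prior}}$. I would then condition on the covariate sequence $\{\S_{<t}\}_{t\in\mathcal{T}_1}$ and use two facts: (i) under Assumption~\ref{assump:DAG_B} the labels are conditionally independent given the covariates, and under $H_0^\epsilon(\s)$ every $t$ with $\S_{<t}\in B(\s;\epsilon)$ has $\P(Y_t=1\mid\S_{<t})=\P(Y=1)=:\pi$, so the labels in $\mathcal{D}_B$ are conditionally i.i.d.\ $\mathrm{Bernoulli}(\pi)$; and (ii) the Example~\ref{ex:bernoulli} estimator produces resampled labels that are i.i.d.\ $\mathrm{Bernoulli}(p_{t_2})$ with $p_{t_2}\to\pi$ a.s.\ (a law of large numbers for the marginal label process, as in Example~\ref{ex:bernoulli}), so the resampled local block has the same conditional limiting law as $\mathcal{D}_B$. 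Continuity of the regression map (as in Theorem~\ref{thm:convergence_test_stat}, Assumption~\ref{assump:continuos_regression}) plus the continuous mapping theorem then makes $\int_{B(\s;\epsilon)}(\widehat m_{\text{post}}(\s')-c)^2\,dQ(\s')$ have the same limiting distribution on the two sides for any fixed $c$, and the conditioning on $\{\S_{<t}\}$ is removed by bounded convergence since the integrand lies in $[0,1]$. One small formality to record: $\P(\S_{<t}\in\partial B(\s;\epsilon))=0$, so the local index set is a.s.\ a well-behaved functional of the data.

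The main obstacle is the prior term. Unlike in Theorem~\ref{thm:convergence_test_stat}, one \emph{cannot} assert $\mathcal{D}^{t_2}_0\to\mathcal{D}$ in distribution here: under the \emph{local} null the non-local coordinates of $\{Y_t\}_{t\in\mathcal{T}_1}$ carry the varying parameters $m_{\text{post}}(\S_{<t})$, whereas the resample is globally i.i.d.\ $\mathrm{Bernoulli}(p_{t_2})$, so the global average $\widehat m_{\text{prior}}=|\mathcal{T}_1|^{-1}\sum_{t\in\mathcal{T}_1}Y_t$ need not match in distribution at fixed $|\mathcal{T}_1|$. What rescues the argument is that $\widehat m_{\text{prior}}$ enters $\lambda$ only as a scalar and has mean $\P(Y=1)=\pi$ on both sides by stationarity; the resolution I would pursue is to let $\widehat m_{\text{prior}}$ be asymptotically the constant $c=\pi$ --- either by letting $|\mathcal{T}_1|\to\infty$ with an ergodicity argument for the marginal process, or, following Algorithm~\ref{alg:distributional-differences} literally, by noting that $\widehat m_{\text{prior}}$ is computed once and held \emph{fixed} over the $B$ replications, hence can be conditioned on and treated as a constant --- and then substitute $c=\widehat m_{\text{prior}}$ in the display of the previous paragraph. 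This is the place where the asymptotic regime (growing/ergodic $|\mathcal{T}_1|$, or the fixed-prior convention) must be spelled out; once it is, stage one is complete and the conclusion follows exactly as in Corollary~\ref{cor:typeI}.
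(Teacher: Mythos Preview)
Your proposal is correct and follows essentially the same route as the paper: define the local index set $\mathcal{I}=\{t\in\mathcal{T}_1:\S_{<t}\in B(\s;\epsilon)\}$, use DAG~B together with $H_0^\epsilon(\s)$ to see that the local labels are (conditionally) i.i.d.\ $\mathrm{Bernoulli}(\pi)$ and hence matched in the limit by the Example~\ref{ex:bernoulli} bootstrap, obtain the local analogue of Theorem~\ref{thm:convergence_test_stat}, and then repeat the argument of Corollary~\ref{cor:typeI}. Your scrutiny of $\widehat m_{\text{prior}}$ is well placed --- the paper simply asserts that $\lambda(\mathcal{D})$ depends on $\mathcal{D}$ only through $\{\S_{<t}\}_{t\in\mathcal{T}_1}$ and $\{(\S_{<t},Y_t)\}_{t\in\mathcal{I}}$, implicitly holding the prior fixed across resamples as Algorithm~\ref{alg:distributional-differences} does, so your second resolution is exactly the intended reading.
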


\begin{proof}

	Let $\mathcal{I}=\{t \in \mathcal{T}_1: \S_{<t} \in B(s;\epsilon) \}$. Under $H_0^{\epsilon}(\s)$,
	\begin{align*}
	\P( \S_{<t} \leq \s_{<t} \ &\forall t \in  \mathcal{T}_{1} \mbox{ and } Y^0_t  \leq y_t \ \forall t \in  \mathcal{I}  |\mathcal{I}  ) \\
	&=
	\P\left( \S_{<t} \leq \s_{<t}  \  \forall t \in  \mathcal{T}_{1} \right|\mathcal{I})\P\left( Y^0_t \leq y_t  \  \forall t \in  \mathcal{I}  |\mathcal{I}\right) \notag  && \text{By construction} \\
	&\xrightarrow[t_2 \longrightarrow\infty]{\enskip  \enskip}
	\P\left( \S_{<t} \leq \s_{<t}  \  \forall t \in  \mathcal{T}_{1}  | \mathcal{I}  \right)\P\left( Y_t \leq y_t  \  \forall t \in  \mathcal{I} | \mathcal{I} \right) && \text{Example 1}\\
	&=\P( \S_{<t} \leq \s_{<t} \ \forall t \in  \mathcal{T}_{1} \mbox{ and } Y_t  \leq y_t \ \forall t \in  \mathcal{I}  |\mathcal{I}  ). && \text{Assumption \ref{assump:DAG_B}}
	\end{align*}
	
	The result follows from noticing that, under Assumption \ref{assump:local_reg}, $\lambda(\mathcal{D})$  depends on $\mathcal{D}$ only through  $\{\S_{<t}\}_{t \in  \mathcal{T}_{1}}$ and $\{ (\S_{<t}, Y_t) \}_{t \in \mathcal{I}}$ and by repeating the arguments of the proofs of Theorem 1 and Corollary 1.

\end{proof}


\end{document}